\newcommand{\bse}{\begin{subequations}}
\newcommand{\ese}{\end{subequations}}
\newtheorem{theorem}{Theorem}[section]
\newtheorem{definition}[theorem]{Definition}
\newtheorem{remark}[theorem]{Remark}
\newtheorem{corollary}[theorem]{Corollary}
\newtheorem{proposition}[theorem]{Proposition}
\numberwithin{equation}{section}
\DeclareMathOperator{\tr}{tr}
\DeclareMathOperator{\diag}{diag}
\def\undertilde#1{\mathord{\vtop{\ialign{##\crcr
$\hfil\displaystyle{#1}\hfil$\crcr\noalign{\kern1.5pt\nointerlineskip}
$\hfil\tilde{}\hfil$\crcr\noalign{\kern-6.5pt}}}}}
\def\underhat#1{\mathord{\vtop{\ialign{##\crcr
$\hfil\displaystyle{#1}\hfil$\crcr\noalign{\kern1.5pt\nointerlineskip}
$\hfil\hat{}\hfil$\crcr\noalign{\kern-6.5pt}}}}}
\def\underbar#1{\mathord{\vtop{\ialign{##\crcr
$\hfil\displaystyle{#1}\hfil$\crcr\noalign{\kern1.5pt\nointerlineskip}
$\hfil\bar{}\hfil$\crcr\noalign{\kern-6.5pt}}}}}
\def\underdot#1{\mathord{\vtop{\ialign{##\crcr
$\hfil\displaystyle{#1}\hfil$\crcr\noalign{\kern1.5pt\nointerlineskip}
$\hfil\dot{}\hfil$\crcr\noalign{\kern-6.5pt}}}}}
\def\undercheck#1{\mathord{\vtop{\ialign{##\crcr
$\hfil\displaystyle{#1}\hfil$\crcr\noalign{\kern1.5pt\nointerlineskip}
$\hfil\check{}\hfil$\crcr\noalign{\kern-6.5pt}}}}}
\def\wt#1{\tilde{#1}}
\def\ut#1{\undertilde{#1}}
\newcommand{\rT}{\mathrm{T}}
\newcommand{\rd}{\mathrm{d}}
\newcommand{\bLd}{\mathbf{\Lambda}}
\newcommand{\tbLd}{{}^{t\!}\mathbf{\Lambda}}
\newcommand{\bO}{\mathbf{O}}
\newcommand{\tbO}{{}^{t\!}\mathbf{O}}
\newcommand{\bOa}{\mathbf{\Omega}}
\newcommand{\bC}{\mathbf{C}}
\newcommand{\bU}{\mathbf{U}}
\newcommand{\bu}{\mathbf{u}}
\newcommand{\tbu}{{}^{t\!}\mathbf{u}}
\newcommand{\bc}{\mathbf{c}}
\newcommand{\tbc}{{}^{t\!}\mathbf{c}}
\newcommand{\bA}{\mathbf{A}}
\newcommand{\bM}{\mathbf{M}}
\newcommand{\ld}{\lambda}
\newcommand{\oa}{\omega}
\newcommand{\Oa}{\Omega}
\newcommand{\br}{\mathbf{r}}
\newcommand{\bs}{\mathbf{s}}
\newcommand{\bK}{\mathbf{K}}
\title[Direct linearisation of the discrete-time two-dimensional Toda lattices]{Direct linearisation of the discrete-time \\ two-dimensional Toda lattices}
\author[Wei Fu]{Wei Fu}
\address{School of Mathematics, University of Leeds, Leeds LS2 9JT, UK}
\begin{document}

\begin{abstract}
The discrete-time two-dimensional Toda lattice of $A_\infty$-type is studied within the direct linearisation framework, 
which allows us to deal with several nonlinear equations in this class simultaneously and to construct more general solutions of these equations. 
The periodic reductions of this model are also considered, 
giving rise to the discrete-time two-dimensional Toda lattices of $A_{r-1}^{(1)}$-type for $r\geq 2$ 
(which amount to the negative flows of members in the discrete Gel'fand--Dikii hierarchy) and their integrability properties. 

\bigskip
\paragraph{Keywords:} direct linearisation, discrete-time two-dimensional Toda lattice, exact solution, tau function, Lax pair, periodic reduction 
\end{abstract}

\maketitle

\section{Introduction}\label{S:Intro}

The two-dimensional Toda lattice (2DTL) 
\begin{align}\label{2DTL}
 \partial_1\partial_{-1}\varphi_n=-e^{\varphi_{n+1}-\varphi_n}+e^{\varphi_n-\varphi_{n-1}},
\end{align}
where $\varphi_n$ is the potential as a function of two continuous time variables $x_1$ and $x_{-1}$ and one discrete spatial variable $n$, 
and $\partial_1$ and $\partial_{-1}$ denote the partial derivatives $\partial/\partial x_1$ and $\partial/\partial x_{-1}$, respectively, 
was originally proposed by Mikhailov \cite{Mik79} in 1979, as an integrable generalisation of the famous one-dimensional Toda lattice \cite{Tod67}. 
Equation \eqref{2DTL} can alternatively be written in the form of (cf. e.g. Jimbo and Miwa's review paper \cite{JM83}) 
\begin{align}\label{2DTL:Cartan}
 \partial_{1}\partial_{-1}\theta_n=-\sum_{m\in\mathbb{Z}}a_{n,m}e^{-\theta_m},
\end{align}
where $\theta_n=\varphi_{n-1}-\varphi_n$ and $a_{n,m}$ are the entries of the Cartan matrix 
corresponding to the infinite-dimensional algebra $A_\infty$, namely 
\begin{align*}
 a_{n,m}=
 \left\{
 \begin{array}{ll}
  2, & n=m, \\
  -1, & n=m\pm 1, \\
  0, & \hbox{otherwise},
 \end{array}
 \right.
\end{align*}
and thus, it is also reasonable to refer to this model as the 2DTL of $A_\infty$-type. 
The 2DTL is mathematically remarkable because it is still integrable when the algebra $A_\infty$ is replaced by various algebras, 
see Mikhailov, Olshanetsky and Perelomov \cite{MOP83}, Fordy and Gibbons \cite{FG80,FG83}, Wilson \cite{Wil81}, and also the Kyoto School \cite{UT83,DJM83,JM83}. 
Among these the two typical ones are the 2DTLs of $A_1^{(1)}$-type and $A_2^{(2)}$-type which can be written in scalar form, 
namely the well-known sinh--Gordon and Tzitzeica equations 
\begin{align}\label{sinh-Gordon}
 \partial_1\partial_{-1}\varphi_0=e^{2\varphi_0}+e^{-2\varphi_0} \quad \hbox{and} \quad 
 \partial_1\partial_{-1}\varphi_0=e^{2\varphi_0}-e^{-\varphi_0}.
\end{align}
Exact solutions to the 2DTLs of different types are also obtained by using various methods, 
see e.g. \cite{Mik81,HOS88,NW97} for soliton solutions, and also \cite{BMW17} for higher-rank solutions. 
For the algebraic structure of the 2DTL, we refer to Ueno and Takasaki \cite{UT83} (see also \cite{Tak18}). 

Discrete integrable systems are often considered being master models in the theory of integrable systems, 
mainly due to their rich algebraic structure and connection with modern mathematics and physics, see e.g. \cite{HJN16}. 
The discretisation (discrete in both time variables) of the 2DTL was considered from several perspectives in the past decades. 
An effective way to construct the discrete 2DTL is to discretise the bilinear 2DTL, attributed to Hirota \cite{Hir81} and Miwa \cite{Miw82}. 
The mechanism of such a discretisation was explained by Date, Jimbo and Miwa \cite{DJM82} from the viewpoint of the theory of transformation groups, 
motivated by the idea in \cite{Miw82}. 
Moreover, the $q$-discretisation of the 2DTL, which is closely related to quantum groups, was also obtained by the bilinear approach in \cite{KOS94}. 
Another approach to discretise the 2DTL was introduced by Fordy and Gibbons \cite{FG80,FG83}, where the discrete equation arises as 
the superposition formula of two B\"acklund transforms for the continuous-time 2DTL on the nonlinear level. 
A recent significant progress on the discrete-time 2DTLs was made in \cite{GHY12,Smi15}, 
where the integrability is guaranteed in sense that first integrals and symmetries are preserved in the discretisation. 

Our motivation is to consider the discrete-time 2DTL within the so-called direct linearisation framework. 
The direct linearising method was first proposed by Fokas and Ablowitz \cite{FA81} to construct  
several classes of exact solutions of the continuous Korteweg--de Vries (KdV) equation, and was later 
generalised by the Dutch School to study discrete integrable systems, see e.g. \cite{NCW85,NPCQ92,NC95}. 
The benefit of this approach is that it helps us to deal with several discrete nonlinear equations 
in the same class simultaneously, and more importantly, it provides very general solutions to these nonlinear integrable equations. 
Recently, the direct linearising method was developed to study the linear, bilinear and nonlinear structures of a class of integrable equations 
from a unified point of view \cite{FN17a}, illustrated by three-dimensional (3D) models including the discrete Kadomtsev--Petviashvili (KP)-type equations. 

In the present paper, three nonlinear discrete equations in the class of the discrete-time 2DTL of $A_\infty$-type are constructed in the direct linearisation, 
including the bilinear and modified discrete-time 2DTL equations given by Date, Jimbo and Miwa in \cite{DJM82}, and also an unmodified equation 
which is a new parametrisation of an unnamed octahedron-type equation that appeared recently in \cite{FNR15}. 
Their associated Lax pairs are also obtained in the scheme, which completes the incomplete part in \cite{DJM82} 
(though the author firmly believes that those authors were able to construct the Lax pairs in their framework). 
The direct linearising solutions for these equations are natural consequences of the scheme, 
reducing to soliton solutions as a particular degenerate case. 
The periodic reductions are also considered in the framework, resulting in the corresponding two-dimensional (2D) discrete integrable systems, 
namely the discrete-time 2DTLs of $A_{r-1}^{(1)}$-type for $r\geq2$ 
(amounting to the negative flows of members in the discrete Gel'fand--Dikii (GD) hierarchy), 
in which the $A_1^{(1)}$ class provides the well-known discrete sinh--Gordon (or sine--Gordon) equation and its gauge equivalent models. 
When $r\geq3$, we obtain the new discrete integrable systems very recently given by Fordy and Xenitidis \cite{FX17}. 
The nonlinear forms, the bilinear formalism, the Lax matrices and the direct linearising solutions for these models are all discussed, 
from the perspective of the direct linearisation.

The paper is organised as follows. In section \ref{S:DL}, we introduce the general theory of the direct linearisation, 
in the language of infinite matrix. Sections \ref{S:2DTL} and \ref{S:Reduction} are contributed to the direct linearisation schemes 
of the discrete-time 2DTL equations of $A_\infty$-type and $A_{r-1}^{(1)}$-type, respectively, and their integrability properties. 

\section{General theory}\label{S:DL}

\subsection{Infinite-dimensional projection and index-raising matrices}

In the direct linearising method, we need the notion of infinite matrix. 
In this subsection, we only give a brief introduction to such a notion. 
For more detail, we refer the reader to \cite{FN17a}. 
The fundamental objects in infinite matrices include a projection matrix $\bO$ and two index-raising matrices $\bLd$ and $\tbLd$. 
They are all matrices of size $\infty\times\infty$ having their respective entries 
\begin{align}\label{InfMat}
 (\bO)_{i,j}=\delta_{i,0}\delta_{0,j}, \quad (\bLd)_{i,j}=\delta_{i+1,j}, \quad \hbox{and} \quad (\tbLd)_{i,j}=\delta_{i,j+1},
\end{align}
respectively, where $\delta_{\cdot,\cdot}$ is the standard Kronecker delta function defined as 
\begin{align*}
 \delta_{i,j}=
 \left\{
 \begin{array}{ll}
 1, & i=j, \\
 0, & i\neq j,
 \end{array}
 \right. \quad
 \forall i,j\in\mathbb{Z}.
\end{align*}
From the definitions, one can observe that 
the transpose of $\bO$ obeys $\tbO=\bO$, and $\bLd$ and $\tbLd$ are the transposes for each other. 
The projection operator $\bO$ is a ``centred'' infinite matrix and satisfies the property $\bO^2=\bO$. 
Given a general infinite matrix $\bU$ with the $(i,j)$-entries $U_{i,j}$, one can verify that 
\begin{align*}
 (\bO\cdot\bU)_{i,j}=\delta_{i,0}U_{0,j} \quad \hbox{and} \quad (\bU\cdot\bO)_{i,j}=U_{i,0}\delta_{0,j},
\end{align*}
according to the definition of $\bO$. In terms of $\bLd$ and $\tbLd$, we have operations 
\begin{align*}
 (\bLd^{i'}\cdot\bU)_{i,j}=U_{i+i',j} \quad \hbox{and} \quad (\bU\cdot\tbLd^{j'})_{i,j}=U_{i,j+j'},
\end{align*}
which follow from the definitions of $\bLd$ and $\tbLd$, namely they raise the row and column indices, respectively. 
Similarly, for a given infinite-dimensional column vector $\bu$ having its $i$th-component $u^{(i)}$, and its transpose, 
i.e. an infinite-dimensional row vector $\tbu$ still having its $i$th-component $u^{(i)}$, we have the following operations: 
\begin{align*}
 (\bO\cdot\bu)_{i}=\delta_{i,0}u^{(0)}, \quad (\tbu\cdot\bO)_{j}=u^{(0)}\delta_{0,j}, \quad 
 (\bLd^{i'}\cdot\bu)_{i}=u^{(i+i')}, \quad (\bu\cdot\tbLd^{j'})_{j}=u^{(j+j')},
\end{align*}
where $(\cdot)_i$ and $(\cdot)_j$ denote taking the $i$th- and $j$th-components of an infinite-dimensional vector, respectively. 
In the direct linearisation, we need to consider two particular infinite-dimensional vectors 
\begin{align*}
 \bc_k=(\cdots,k^{-1},1,k,\cdots)^\rT \quad \hbox{and} \quad
 \tbc_{k'}=(\cdots,k'^{-1},1,k',\cdots).
\end{align*}
The operations of $\bO$, $\bLd$ and $\tbLd$ on these two vectors satisfy the following relations: 
\begin{align}\label{P:ck}
 \bLd^i\cdot\bc_k=k^i\bc_k, \quad \tbc_{k'}\cdot\tbLd^j=k'^j\tbc_{k'}, \quad \tbc_{k'}\cdot\bO\cdot\bc_k=1.
\end{align}
The above are the fundamental operations of infinite-dimensional matrices and vectors. 
More complicated operations involving these objects can be derived from these fundamental ones or from the definitions. 

\subsection{Linear integral equation and its infinite matrix representation}

The idea of the direct linearising method is to solve a nonlinear integrable equation by considering its associated linear integral equation. 
Reversely, a linear integral equation having a certain structure also brings a class of related nonlinear equations. 

We consider a general linear integral equation taking the form of 
\begin{align}\label{Integral}
 \bu_k+\iint_D\rd\zeta(l,l')\rho_k\Oa_{k,l'}\sigma_{l'}\bu_l=\rho_k\bc_k, 
\end{align}
where the wave function $\bu_k$ is an infinite-dimensional column vector having its $i$th-component $u_k^{(i)}$ as a smooth function of 
the dynamical variables and their associated parameters, also depending on the spectral parameter $k$; 
the Cauchy kernel $\Oa_{k,l'}$ is an algebraic expression of the spectral parameters $k$ and $l'$, which is independent of the dynamical variables; 
the plane wave functions $\rho_k$ and $\sigma_{l'}$ are expressions of dynamical variables (i.e. flow variables), 
depending on the spectral parameters $k$ and $l'$, respectively; 
the measure $\rd\zeta(l,l')$ depending on the spectral variables $l$ and $l'$, and the integration domain $D$ can be determined later for particular 
classes of solutions. 

The key point is that the plane wave factors $\rho_k$ and $\sigma_{k'}$ containing the information of the dynamics, 
determine the linear structure in the integral equation \eqref{Integral}; 
the Cauchy kernel $\Oa_{k,k'}$ and the measure $\rd\zeta(k,k')$, instead, describe the dependence of the spectral variables $k$ and $k'$, 
governing the nonlinear structure of the resulting nonlinear equations.  

In order to relate a class of nonlinear equations to a given linear integral equation, 
we interpret the structure of the linear integral equation \eqref{Integral} in infinite matrix language.
First of all, we introduce an infinite matrix $\bOa$ defined by the following relation: 
\begin{align}\label{Omega}
 \Oa_{k,k'}=\tbc_{k'}\cdot\bOa\cdot\bc_k.
\end{align}
This relation implies that $\bOa$ is the infinite matrix representation of the Cauchy kernel $\Oa_{k,k'}$. 
The second object we need is an infinite matrix $\bC$ defined as 
\begin{align}\label{C}
 \bC=\iint_D\rd\zeta(k,k')\rho_k\bc_k\tbc_{k'}\sigma_{k'}.
\end{align}
The core part of this infinite matrix is $\rho_k\sigma_{k'}$, namely the effective dispersion in the linear integral equation, 
which implies that $\bC$ is the infinite matrix representation of the effective plane wave factor. 
We also define an infinite matrix $\bU$ by 
\begin{align}\label{Potential}
 \bU=\iint_D\rd\zeta(k,k')\bu_k\tbc_{k'}\sigma_{k'},
\end{align}
which is a nonlinearisation of the wave function $\bu_k$ in the linear integral equation \eqref{Integral}. 
Using these quantities, we can represent the linear integral equation by the above infinite matrices. 
\begin{proposition}
The linear integral \eqref{Integral} has an infinite matrix representation 
\begin{align}\label{uk}
 \bu_k=(1-\bU\cdot\bOa)\cdot\rho_k\bc_k,
\end{align}
where $\bOa$, $\bU$ are the infinite matrices defined in \eqref{Omega} and \eqref{Potential}. 
\end{proposition}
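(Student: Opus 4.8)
The plan is to work directly with the linear integral equation \eqref{Integral} and rewrite its integral term as an infinite-matrix product acting on the vector $\rho_k\bc_k$. The only ingredients needed are the defining relations \eqref{Omega} and \eqref{Potential}, together with the elementary observation that $\bOa$ and $\bc_k$ carry no dependence on the spectral integration variables.

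First I would substitute $\Oa_{k,l'}=\tbc_{l'}\cdot\bOa\cdot\bc_k$ from \eqref{Omega} into the integrand of $\iint_D\rd\zeta(l,l')\,\rho_k\Oa_{k,l'}\sigma_{l'}\bu_l$. Since $\rho_k$, $\sigma_{l'}$ and $\Oa_{k,l'}$ are scalars while $\bu_l$ is an infinite column vector, a short index computation gives $\Oa_{k,l'}\bu_l=(\bu_l\tbc_{l'})\cdot\bOa\cdot\bc_k$, where $\bu_l\tbc_{l'}$ is the outer product, an infinite matrix. Hence the integrand equals $\rho_k\,(\bu_l\tbc_{l'}\sigma_{l'})\cdot\bOa\cdot\bc_k$, and since neither $\bOa$ nor $\bc_k$ depends on $l$ or $l'$, I would pull them outside the double integral to obtain $\rho_k\bigl(\iint_D\rd\zeta(l,l')\,\bu_l\tbc_{l'}\sigma_{l'}\bigr)\cdot\bOa\cdot\bc_k$. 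By the definition \eqref{Potential} of $\bU$ (after renaming the dummy variables $l,l'\mapsto k,k'$) the bracketed expression is exactly $\bU$, so the integral term is $\bU\cdot\bOa\cdot\rho_k\bc_k$. Substituting back, \eqref{Integral} becomes $\bu_k+\bU\cdot\bOa\cdot\rho_k\bc_k=\rho_k\bc_k$, which is \eqref{uk} after rearranging; reading the computation in reverse shows that the two formulations are in fact equivalent.

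The steps above are formal, and the one point that genuinely needs care — the main obstacle — is the legitimacy of interchanging the spectral integration over $D$ with the index summations implicit in the infinite-matrix multiplications (a Fubini-type argument), and, relatedly, the fact that $\bOa$ and $\bU$ have well-defined, entrywise convergent components. This is precisely where the conditions on the measure $\rd\zeta(l,l')$ and the domain $D$, left unspecified after \eqref{Integral}, come in: for the concrete choices used in the solution classes treated later in the paper these conditions are satisfied, so the componentwise identities invoked above are valid and \eqref{uk} follows.
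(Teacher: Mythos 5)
Your argument is essentially identical to the paper's own (commented-out) proof: substitute $\Oa_{k,l'}=\tbc_{l'}\cdot\bOa\cdot\bc_k$ from \eqref{Omega}, recognise the outer product $\bu_l\tbc_{l'}$, pull the $l,l'$-independent factors $\bOa\cdot\bc_k$ outside the integral so that the integral term becomes $\bU\cdot\bOa\cdot\rho_k\bc_k$ by \eqref{Potential}, and rearrange. Your closing remark on the Fubini-type interchange and convergence is a reasonable extra caveat that the paper leaves implicit, but it does not change the substance of the argument.
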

\begin{corollary}
The infinite matrix $\bU$ satisfies the following relation: 
\begin{align}\label{U}
\bU=(1-\bU\cdot\bOa)\cdot\bC, \quad \hbox{or alternatively} \quad \bU=\bC\cdot(1+\bOa\cdot\bC)^{-1};
\end{align}
in other words, the dynamics of $\bU$ is governed by $\bC$ (the linear structure) and $\bOa$ (the nonlinear structure). 
\end{corollary}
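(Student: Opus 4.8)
The plan is to derive the relation for $\bU$ directly from its defining integral \eqref{Potential} by substituting the infinite-matrix form \eqref{uk} of the wave function that was just established in the Proposition. First I would take \eqref{Potential}, namely $\bU=\iint_D\rd\zeta(k,k')\bu_k\tbc_{k'}\sigma_{k'}$, and replace $\bu_k$ by $(1-\bU\cdot\bOa)\cdot\rho_k\bc_k$. Since the infinite matrix $1-\bU\cdot\bOa$ does not depend on the spectral variables $k,k'$, it can be pulled out of the double integral, leaving $\bU=(1-\bU\cdot\bOa)\cdot\iint_D\rd\zeta(k,k')\rho_k\bc_k\tbc_{k'}\sigma_{k'}$. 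The remaining integral is exactly the definition \eqref{C} of $\bC$, which gives the first form $\bU=(1-\bU\cdot\bOa)\cdot\bC$.

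For the second, equivalent form I would simply solve the linear relation $\bU=(1-\bU\cdot\bOa)\cdot\bC=\bC-\bU\cdot\bOa\cdot\bC$ for $\bU$. Bringing the $\bU$-terms together yields $\bU\cdot(1+\bOa\cdot\bC)=\bC$, and hence $\bU=\bC\cdot(1+\bOa\cdot\bC)^{-1}$, provided the infinite matrix $1+\bOa\cdot\bC$ is invertible. This last point is the only real subtlety: one must interpret $(1+\bOa\cdot\bC)^{-1}$ as a formal inverse, e.g. via the Neumann series $\sum_{m\geq 0}(-\bOa\cdot\bC)^m$, which is the standard convention in this infinite-matrix calculus; its well-definedness on the relevant class of solutions is inherited from the solvability of the linear integral equation \eqref{Integral} itself and need not be re-proved here.

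The main obstacle, such as it is, is bookkeeping rather than mathematics: one has to be careful that multiplication by the $k,k'$-independent factor $(1-\bU\cdot\bOa)$ genuinely commutes past the integration sign, i.e. that the integral and the infinite-matrix product associate correctly. This is immediate once one writes everything componentwise, since each entry of $\bu_k\tbc_{k'}$ is a scalar function of the spectral variables multiplied into the fixed infinite matrix, but it is worth noting explicitly. Given that the Proposition has already been invoked to rewrite $\bu_k$, the corollary then follows in two short lines.
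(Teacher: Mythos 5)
Your proof is correct and follows essentially the same route as the paper: the paper likewise obtains the first identity by multiplying \eqref{uk} by $\tbc_{k'}\sigma_{k'}$ and integrating against $\rd\zeta(k,k')$ over $D$ (which is exactly your substitution into \eqref{Potential}), and the passage to $\bU=\bC\cdot(1+\bOa\cdot\bC)^{-1}$ is the same elementary rearrangement. Your remarks on pulling the $k,k'$-independent factor through the integral and on the formal invertibility of $1+\bOa\cdot\bC$ are sensible but not points the paper dwells on.
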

In addition, we also need the notion of the tau function in the framework. 
\begin{definition}
The tau function in the direct linearisation framework is defined as 
\begin{align}\label{tau}
 \tau=\det(1+\bOa\cdot\bC), 
\end{align}
in which the determinant should be understood as the expansion of $\exp\{\tr\left[\ln(1+\bOa\cdot\bC)\right]\}$. 
\end{definition}
These quantities constitute the key ingredients in the direct linearising approach. 
For a given linear integral equation, namely for fixed plane wave factors $\rho_k$ and $\sigma_{k'}$ (i.e. the linear structure), 
and fixed Cauchy kernel $\Oa_{k,k'}$ and measure $\rd\zeta(k,k')$ (i.e. the nonlinear structure), 
the direct linearisation helps to algebraically construct a class of closed-form equations from the following perspectives: 
i) The components of $\bu_k$ solve linear equations (i.e. Lax pairs); 
ii) The tau function $\tau$ solves the bilinear equations in Hirota's form; 
iii) The entries of $\bU$ solve the nonlinear equations in the class; 
iv) The quantities given in \eqref{Integral}, \eqref{tau} and \eqref{Potential} provide the direct linearising solutions 
to the corresponding closed-form equations.

\section{Discrete-time two-dimensional Toda lattice of $A_\infty$-type}\label{S:2DTL}

\subsection{Notations}\label{S:Notation}
We introduce some essential notations in the discrete theory. For an arbitrary function $f=f_{n_1,n_{-1},n}\doteq f(n_1,n_{-1},n)$ of 
discrete independent variables $n_1$, $n_{-1}$ and $n$ associated with lattice parameters $p_1$, $p_{-1}$ and zero, respectively. 
The following accent marks denote the discrete forward shift operators with respect to their corresponding lattice directions: 
\begin{align*}
 &\wt f=f_{n_1+1,n_{-1},n}, \quad \check f=f_{n_1,n_{-1}+1,n}, \quad \dot f=f_{n_1,n_{-1},n+1}. 
\end{align*}
Similarly, for backward shift operators we have 
\begin{align*}
 &\ut f=f_{n_1-1,n_{-1},n}, \quad \undercheck f=f_{n_1,n_{-1}-1,n}, \quad \underdot f=f_{n_1,n_{-1},n-1}. 
\end{align*}
Combinations of the above accent marks then naturally denote the compositions of these discrete shift operations.

\subsection{Nonlinear and linear structures}\label{S:Algebra}

To algebraically construct equations in the class of the discrete-time 2DTL of $A_\infty$-type, we consider the following Cauchy kernel and measure: 
\begin{align}\label{dt2DTL:Kernel}
 \Oa_{k,k'}=\frac{1}{k+k'}, \quad \rd\zeta(k,k') \quad \hbox{being arbitrary},
\end{align}
as well as the plane wave factors 
\begin{align}\label{dt2DTL:Linear}
 \rho_k=(p_1+k)^{n_1}(p_{-1}+k)^{n_{-1}}k^n \quad \hbox{and} \quad \sigma_{k'}=(p_1-k')^{-n_1}(p_{-1}-k'^{-1})^{-n_{-1}}(-k')^{-n}.
\end{align}
In concrete calculation, the two plane wave factors normally combine together (cf. the infinite matrix $\bC$), and form the effective plane wave factor 
\begin{align}\label{dt2DTL:PWF}
 \rho_k\sigma_{k'}=\left(\frac{p_1+k}{p_1-k'}\right)^{n_1}\left(\frac{p_{-1}+k^{-1}}{p_{-1}-k'^{-1}}\right)^{n_{-1}}\left(-\frac{k}{k'}\right)^n,
\end{align}
which completely governs the dynamics in the resulting nonlinear models. 

\subsection{Soliton solution}\label{S:Soliton}
According to the general statement, equations \eqref{Integral}, \eqref{tau} and \eqref{Potential} 
provide us with the direct linearising solutions to the discrete-time 2DTL of $A_\infty$-type. 
As an example, below we give the explicit formulae for soliton solutions. 
We take a particular measure involving a finite number of singularities, namely 
\begin{align}\label{dt2DTL:Sigularity}
 \rd\zeta(k,k')=\sum_{i=1}^{N}\sum_{j=1}^{N'}A_{i,j}\delta(k-k_i)\delta(k'-k'_j)\rd k\rd k',
\end{align}
where $A_{i,j}$ is the $(i,j)$-entry of a constant full-rank $N\times N'$ matrix $\bA$, $k_i$ and $k'_j$ are the singularities, 
and the delta function here should be understood as $\delta(k-k_j)=\frac{1}{2\pi\mathfrak{i}}\frac{1}{k-k_j}$ with $\mathfrak{i}$ being the imaginary unit. 
Let the domain $D$ contain all these singularities. This reduces the linear integral equation \eqref{Integral} to 
\begin{align}\label{Integral:Reduc}
 \bu_k+\sum_{i=1}^{N}\sum_{j=1}^{N'}A_{i,j}\frac{\rho_k\sigma_{k'_j}}{k+k'_j}\bu_{k_i}=\rho_k\bc_k,
\end{align}
according to the residue theorem. Now we introduce an $N'\times N$ generalised Cauchy matrix $\bM$ with its $(j,i)$-entry 
\begin{align*}
 M_{j,i}=\frac{\rho_{k_i}\sigma_{k'_j}}{k_i+k'_j}, \quad j=1,2,\cdots,N', \quad i=1,2,\cdots,N,
\end{align*}
where $\rho_k\sigma_{k'}$ is given by \eqref{dt2DTL:PWF}. 
Taking $k=k_i$ for $i=1,2,\cdots,N$ in the reduced equation \eqref{Integral:Reduc}, we obtain 
\begin{align*}
 (\bu_{k_1},\bu_{k_2},\cdots,\bu_{k_N})+(\bu_{k_1},\bu_{k_2},\cdots,\bu_{k_N})\bA\bM
 =(\rho_{k_1}\bc_{k_1},\rho_{k_2}\bc_{k_2},\cdots,\rho_{k_N}\bc_{k_N})
\end{align*}
and thus, $\bu_{k_i}$ can be expressed by 
\begin{align*}
 (\bu_{k_1},\bu_{k_2},\cdots,\bu_{k_N})=\br^\rT\diag(\bc_{k_1},\bc_{k_2},\cdots,\bc_{k_N})(1+\bA\bM)^{-1},
\end{align*}
where $\br=(\rho_{k_1},\rho_{k_2},\cdots,\rho_{k_N})^\rT$. 
Meanwhile, considering \eqref{Potential} together with \eqref{dt2DTL:Sigularity}, we also have 
\begin{align}\label{Potential:Reduc}
 \bU=\sum_{i=1}^{N}\sum_{j=1}^{N'}A_{i,j}\bu_{k_i}\tbc_{k'_j}\sigma_{k'_j}=(\bu_{k_1},\bu_{k_2},\cdots,\bu_{k_N})\bA
 \diag(\bc_{k'_1},\bc_{k'_2},\cdots,\bc_{k'_{N'}})\bs.
\end{align}
where $\bs=(\sigma_{k'_1},\sigma_{k'_2},\cdots,\sigma_{k'_{N'}})^\rT$. 
Therefore, substituting $\bu_{k_i}$ in \eqref{Potential:Reduc} by using the previous formula, we end up with the explicit expression of $\bU$ as follows: 
\begin{align*}
 \bU=\br^\rT\diag(\bc_{k_1},\bc_{k_2},\cdots,\bc_{k_N})(1+\bA\bM)^{-1}\bA\diag(\bc_{k'_1},\bc_{k'_2},\cdots,\bc_{k'_{N'}})\bs.
\end{align*}
Furthermore, we can also consider the soliton expression for the tau function. According to the definition of the determinant of an infinite matrix, 
we have the following relation for the tau function: 
\begin{align*}
 \ln\tau=\ln\left(\det(1+\bOa\cdot\bC)\right)=\tr\left(\ln(1+\bOa\cdot\bC)\right).
\end{align*}
Thus, by expansion the right hand side can be written as 
\begin{align*}
 \tr\left(\sum_{\gamma=1}^\infty(-1)^{\gamma+1}\frac{1}{\gamma}(\bOa\cdot\bC)^\gamma\right)
 =\sum_{\gamma=1}^\infty(-1)^{\gamma+1}\frac{1}{\gamma}\tr(\bOa\cdot\bC)^\gamma.
\end{align*}
Considering the degeneration \eqref{dt2DTL:Sigularity} and the definition of $\bC$, i.e. \eqref{C}, we can derive
\begin{align*}
 \tr(\bOa\cdot\bC)^\gamma
 =\tr\left(\bOa\cdot\left(\sum_{i=1}^{N}\sum_{j=1}^{N'}A_{i,j}\rho_{k_i}\bc_{k_i}\tbc_{k'_j}\sigma_{k'_j}\right)\right)^\gamma=\tr(\bA\bM)^\gamma,
\end{align*}
where the cyclic permutation of the trace is used in the last step. 
Thus, we end up with the relation 
\begin{align*}
 \ln\tau&=\sum_{\gamma=1}^\infty(-1)^{\gamma+1}\frac{1}{\gamma}\tr(\bA\bM)^\gamma
 =\tr\left(\sum_{\gamma=1}^\infty(-1)^{\gamma+1}\frac{1}{\gamma}(\bA\bM)^\gamma\right) \\
 &=\tr\left(\ln(1+\bA\bM)\right)=\ln\left(\det(1+\bA\bM)\right). 
\end{align*}
We have derived the soliton formulae for the linear, nonlinear and bilinear variables. 
For convenience in the future, we give the formulae for the components $u_k^{(i)}$ in the wave function, 
the entries $U_{i,j}$ in the infinite matrix $\bU$ as well as the tau function $\tau$, 
and conclude all the results in the following theorem: 
\begin{theorem}\label{T:dt2DTLSol}
For the Cauchy kernel, the measure and the effective plane wave factor given in \eqref{dt2DTL:Kernel}, \eqref{dt2DTL:Sigularity} and \eqref{dt2DTL:PWF}, 
the wave function, the tau function and the potential variable in the $(N,N')$-soliton form are determined by the following: 
\begin{align*}
 (u_{k_1}^{(i)},u_{k_2}^{(i)},\cdots,u_{k_N}^{(i)})=\br^\rT\bK^i(1+\bA\bM)^{-1}, \quad
 \tau=\det(1+\bA\bM), \quad
 U_{i,j}=\br^\rT\bK^i(1+\bA\bM)^{-1}\bA\bK'^j\bs,
\end{align*}
in which $\bK=\diag(k_1,k_2,\cdots,k_N)$ and $\bK'=\diag(k'_1,k'_2,\cdots,k'_{N'})$. 
\end{theorem}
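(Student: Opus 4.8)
\section*{Proof proposal}

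The plan is to specialise the general construction of Section~\ref{S:DL} to the degenerate measure \eqref{dt2DTL:Sigularity} and then read off the three quantities componentwise. First I would substitute \eqref{dt2DTL:Sigularity} into the linear integral equation \eqref{Integral}: since $\delta(k-k_j)=\frac{1}{2\pi\ri}\frac{1}{k-k_j}$ and the domain $D$ has been chosen to enclose all the singularities $k_i,k'_j$, the residue theorem collapses the double integral to the finite sum in \eqref{Integral:Reduc}. Evaluating that identity at $k=k_1,\cdots,k_N$ and assembling the $N$ resulting vector equations into one matrix equation shows that the row of wave functions obeys $(\bu_{k_1},\cdots,\bu_{k_N})(1+\bA\bM)=(\rho_{k_1}\bc_{k_1},\cdots,\rho_{k_N}\bc_{k_N})$, with $\bM$ the generalised Cauchy matrix built from the effective plane wave factor \eqref{dt2DTL:PWF}; inverting $1+\bA\bM$ gives the closed form for the $\bu_{k_i}$. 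Feeding this into \eqref{Potential} — which under \eqref{dt2DTL:Sigularity} also reduces to a finite sum, see \eqref{Potential:Reduc} — then expresses $\bU$ as $(\bu_{k_1},\cdots,\bu_{k_N})\,\bA\,\diag(\bc_{k'_1},\cdots,\bc_{k'_{N'}})\,\bs$, and combining the two displays yields the announced closed form for $\bU$ as an infinite matrix.

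To pass from these infinite-matrix identities to the scalar statements of the theorem one only has to take components. By \eqref{P:ck} one has $\bLd^i\cdot\bc_k=k^i\bc_k$ while the central component of $\bc_k$ equals $1$, so the $i$th component of $\bc_{k_m}$ is exactly $k_m^i$, and likewise the $j$th component of $\tbc_{k'_m}$ is $k_m'^j$. Consequently, restricting $\diag(\bc_{k_1},\cdots,\bc_{k_N})$ to its $i$th component replaces it by $\bK^i=\diag(k_1^i,\cdots,k_N^i)$, and restricting the diagonal array of $\bc$-vectors on the right to its $j$th component replaces it by $\bK'^j$. This immediately produces $(u_{k_1}^{(i)},\cdots,u_{k_N}^{(i)})=\br^\rT\bK^i(1+\bA\bM)^{-1}$ together with $U_{i,j}=\br^\rT\bK^i(1+\bA\bM)^{-1}\bA\bK'^j\bs$.

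For the tau function I would start from the definition \eqref{tau} and the identity $\ln\tau=\tr\ln(1+\bOa\cdot\bC)=\sum_{\gamma\geq1}\frac{(-1)^{\gamma+1}}{\gamma}\tr(\bOa\cdot\bC)^\gamma$. Writing $\bC=\sum_{i,j}A_{i,j}\rho_{k_i}\bc_{k_i}\tbc_{k'_j}\sigma_{k'_j}$ from \eqref{C} and \eqref{dt2DTL:Sigularity}, each factor of $\bOa$ sandwiched between two of these rank-one pieces contributes the scalar $\tbc_{k'_j}\cdot\bOa\cdot\bc_{k_i}=\Oa_{k_i,k'_j}=1/(k_i+k'_j)$ by \eqref{Omega} and \eqref{dt2DTL:Kernel}; cyclically permuting the trace then identifies the $\gamma$th term with $\tr(\bA\bM)^\gamma$, and resumming gives $\ln\tau=\tr\ln(1+\bA\bM)=\ln\det(1+\bA\bM)$, as claimed.

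The only genuine subtlety, and the step I would be most careful about, is the legitimacy of these infinite-matrix manipulations: one must know that $\bOa$ pairs with the vectors $\bc_k,\tbc_{k'}$ so that $\tbc_{k'}\cdot\bOa\cdot\bc_k$ converges to $\Oa_{k,k'}$, and that after the degeneration \eqref{dt2DTL:Sigularity} the operator $\bOa\cdot\bC$ has finite rank, which is precisely what makes the trace, the logarithm series, and the cyclic-permutation step meaningful. These analytic prerequisites are exactly the ones set up in the general framework of \cite{FN17a} and may be invoked directly here; once they are in place, the theorem reduces to the bookkeeping described above.
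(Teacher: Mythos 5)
Your proposal is correct and follows essentially the same route as the paper: residue-theorem degeneration of the integral equation to \eqref{Integral:Reduc}, assembly of the evaluations at $k=k_i$ into the matrix identity $(\bu_{k_1},\cdots,\bu_{k_N})(1+\bA\bM)=(\rho_{k_1}\bc_{k_1},\cdots,\rho_{k_N}\bc_{k_N})$, substitution into \eqref{Potential:Reduc}, and the trace--logarithm expansion with cyclic permutation giving $\ln\tau=\ln\det(1+\bA\bM)$. The only addition is your explicit flagging of the convergence and finite-rank prerequisites, which the paper delegates to \cite{FN17a}.
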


\subsection{Discrete dynamics}\label{S:Dynamics}

In this subsection, we discuss the infinite matrix formalism in the direct linearisation 
based on the nonlinear and linear structures of the discrete-time 2DTL, i.e. \eqref{dt2DTL:Kernel} and \eqref{dt2DTL:PWF}, 
which will be used to construct closed-form equations in the next subsection. 

Equation \eqref{dt2DTL:Kernel} implies that the Cauchy kernel obeys the relation $\Oa_{k,k'}k+k'\Oa_{k,k'}=1$. 
Recalling the definition of $\bOa$, i.e. \eqref{Omega}, we can deduce that 
\bse\label{dt2DTL:OaDyn}
\begin{align}
 \bOa\cdot\bLd+\tbLd\cdot\bOa=\bO,
\end{align}
and consequently it can be generalised to 
\begin{align}
 \bOa\cdot(p_1+\bLd)-(p_1-\tbLd)\cdot\bOa=\bO \quad \hbox{and} \quad \bOa\cdot(p_{-1}+\bLd^{-1})-(p_{-1}-\tbLd^{-1})\cdot\bOa=\tbLd^{-1}\cdot\bO\cdot\bLd^{-1}.
\end{align}
\ese
The above relations form the nonlinear structure of the infinite matrix formalism. Below we give the linear structure. 
Observing that the effective plane wave factor \eqref{dt2DTL:PWF} satisfies 
\begin{align*}
 (\rho_k\sigma_{k'})^{\tilde{}}=\left(\frac{p_1+k}{p_1-k'}\right)\rho_k\sigma_{k'}, \quad 
 (\rho_k\sigma_{k'})^{\check{}}=\left(\frac{p_{-1}+k^{-1}}{p_{-1}-k'^{-1}}\right)\rho_k\sigma_{k'}, \quad 
 (\rho_k\sigma_{k'})^{\dot{}}=\left(-\frac{k}{k'}\right)\rho_k\sigma_{k'},
\end{align*}
and the property of $\bc_k$ given in \eqref{P:ck}, we can derive from \eqref{C} the dynamical evolutions of $\bC$ as follows: 
\begin{align}\label{dt2DTL:CDyn}
 \wt\bC\cdot(p_1-\tbLd)=(p_1+\bLd)\cdot\bC, \quad \check\bC\cdot(p_{-1}-\tbLd^{-1})=(p_{-1}+\bLd^{-1})\cdot\bC, \quad \dot\bC\cdot(-\tbLd)=\bLd\cdot\bC.
\end{align}
Equations \eqref{dt2DTL:OaDyn} and \eqref{dt2DTL:CDyn} are the fundamental relations 
which can help to build up the dynamics for key ingredients, namely $\bU$, $\bu_k$ and $\tau$, in the scheme.
\begin{proposition}
The infinite matrix $\bU$ obeys the following dynamical evolutions with respect to the lattice variables $n_1$, $n_{-1}$ and $n$: 
\bse\label{dt2DTL:UDyn}
\begin{align}
 &\wt\bU\cdot(p_1-\tbLd)=(p_1+\bLd)\cdot\bU-\wt\bU\cdot\bO\cdot\bU, \label{dt2DTL:UDyna} \\
 &\check\bU\cdot(p_{-1}-\tbLd^{-1})=(p_{-1}+\bLd^{-1})\cdot\bU-\check\bU\cdot\tbLd^{-1}\cdot\bO\cdot\bLd^{-1}\cdot\bU, \label{dt2DTL:UDynb} \\
 &\dot\bU\cdot(-\tbLd)=\bLd\cdot\bU-\dot\bU\cdot\bO\cdot\bU, \label{dt2DTL:UDync}
\end{align}
\ese
which are the fundamental relations for constructing nonlinear equations in the class of the 2DTL of $A_\infty$-type. 
\end{proposition}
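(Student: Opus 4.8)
The plan is to derive each of the three evolutions in \eqref{dt2DTL:UDyn} from the pair of fundamental relations \eqref{dt2DTL:OaDyn} and \eqref{dt2DTL:CDyn} together with the corollary \eqref{U}, following the standard elimination scheme of the direct linearisation. I would start from the second form $\bU=\bC\cdot(1+\bOa\cdot\bC)^{-1}$, or equivalently work with $(1-\bU\cdot\bOa)\cdot\bC=\bU$, and track how applying a shift (tilde, check or dot) interacts with right-multiplication by the relevant index-raising combination $(p_1-\tbLd)$, $(p_{-1}-\tbLd^{-1})$ or $(-\tbLd)$.

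For the tilde case \eqref{dt2DTL:UDyna}: I would compute $\wt\bU\cdot(p_1-\tbLd)$ using $\wt\bU=(1-\wt\bU\cdot\bOa)\cdot\wt\bC$, then push the factor $(p_1-\tbLd)$ to the right through $\wt\bC$ via the first relation in \eqref{dt2DTL:CDyn}, namely $\wt\bC\cdot(p_1-\tbLd)=(p_1+\bLd)\cdot\bC$. This yields $\wt\bU\cdot(p_1-\tbLd)=(1-\wt\bU\cdot\bOa)\cdot(p_1+\bLd)\cdot\bC$. Next I would move $(p_1+\bLd)$ to the left past $\bOa$ using the first relation in \eqref{dt2DTL:OaDyn}b, $\bOa\cdot(p_1+\bLd)=(p_1-\tbLd)\cdot\bOa+\bO$, so that the expression becomes $(p_1+\bLd)\cdot\bC-\wt\bU\cdot\left[(p_1-\tbLd)\cdot\bOa+\bO\right]\cdot\bC$. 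Recognising $(1-\wt\bU\cdot\bOa)$ factor again and using $\bU=(1-\bU\cdot\bOa)\cdot\bC$ once more, one obtains $(p_1+\bLd)\cdot\bU-\wt\bU\cdot\bO\cdot(\text{something})$; the final simplification identifies $(1-\wt\bU\cdot\bOa)\cdot\bC=\wt\bU$ applied appropriately, collapsing the last term to $\wt\bU\cdot\bO\cdot\bU$. The check and dot cases run the same way, using the second and third relations of \eqref{dt2DTL:CDyn} and the matching relations in \eqref{dt2DTL:OaDyn}; in the check case the extra factors $\tbLd^{-1}$ and $\bLd^{-1}$ in the $\bO$-term arise precisely from the right-hand side $\tbLd^{-1}\cdot\bO\cdot\bLd^{-1}$ of the third relation in \eqref{dt2DTL:OaDyn}b, and in the dot case one uses $\bOa\cdot\bLd+\tbLd\cdot\bOa=\bO$ directly.

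The main obstacle, and the step that needs genuine care rather than routine bookkeeping, is the final collapse of the quadratic term into the stated form $\wt\bU\cdot\bO\cdot\bU$ (respectively $\check\bU\cdot\tbLd^{-1}\cdot\bO\cdot\bLd^{-1}\cdot\bU$ and $\dot\bU\cdot\bO\cdot\bU$): after the two substitutions one is left with an expression of the shape $-\wt\bU\cdot\bO\cdot\bC+\wt\bU\cdot\bOa\cdot(\cdots)$, and one must recognise that the combination multiplying $\bO$ on the right can be replaced by $\bU$ by invoking $(1-\wt\bU\cdot\bOa)\cdot\bC=\wt\bU$ is not quite what is needed — rather one needs $\bC\cdot(1+\bOa\cdot\bC)^{-1}=\bU$ applied to the \emph{unshifted} matrices. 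The cleanest route is probably to first establish the identity $(1-\wt\bU\cdot\bOa)\cdot\bC=\wt\bU\cdot(1+\bOa\cdot\bC)\cdot(1+\bOa\cdot\bC)^{-1}$-type relation, or simply to verify at the level of $\bu_k$: multiply \eqref{uk} on the right appropriately and integrate against $\rd\zeta$, exactly as in the (suppressed) proof of the corollary. I would likely present the argument at the $\bU$ level but fall back to the $\bu_k$/integral-equation level for the one identity that forces the quadratic term into the advertised shape.
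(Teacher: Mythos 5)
Your proposal follows the paper's route for the first two steps --- writing $\wt\bU\cdot(p_1-\tbLd)=(1-\wt\bU\cdot\bOa)\cdot\wt\bC\cdot(p_1-\tbLd)$, invoking the first relation of \eqref{dt2DTL:CDyn}, and then trading $\bOa\cdot(p_1+\bLd)$ for $\bO+(p_1-\tbLd)\cdot\bOa$ via \eqref{dt2DTL:OaDyn} --- and you correctly isolate the only nontrivial point, namely how the leftover terms involving $\bC$ collapse to terms involving $\bU$. But the mechanism you sketch for that last step is where the gap lies. The identity you float, of the shape $(1-\wt\bU\cdot\bOa)\cdot\bC=\wt\bU$, is false: the valid relations are $(1-\bU\cdot\bOa)\cdot\bC=\bU$ with everything unshifted, or $(1-\wt\bU\cdot\bOa)\cdot\wt\bC=\wt\bU$ with everything shifted, and the mixed version has no such closed form. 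Your fallback of ``verifying at the level of $\bu_k$'' is also problematic in the paper's logical order, since the dynamics of $\bu_k$ is itself derived \emph{from} \eqref{dt2DTL:UDyna}, so one risks circularity unless one reverts all the way to shifting the integral equation itself.

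The actual resolution is a one-line transposition that your write-up is missing. After the two substitutions you have
\begin{align*}
 \wt\bU\cdot(p_1-\tbLd)=(p_1+\bLd)\cdot\bC-\wt\bU\cdot\bO\cdot\bC-\wt\bU\cdot(p_1-\tbLd)\cdot\bOa\cdot\bC .
\end{align*}
Move the last term to the left-hand side; it recombines with $\wt\bU\cdot(p_1-\tbLd)$ to produce the factor $\wt\bU\cdot(p_1-\tbLd)\cdot(1+\bOa\cdot\bC)$. Now right-multiply the whole equation by $(1+\bOa\cdot\bC)^{-1}$: every remaining $\bC$ on the right becomes $\bC\cdot(1+\bOa\cdot\bC)^{-1}=\bU$, which is precisely the unshifted identity you said was needed, and \eqref{dt2DTL:UDyna} drops out with the quadratic term in the advertised form $\wt\bU\cdot\bO\cdot\bU$. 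The check and dot cases close in exactly the same way, with $\bO$ replaced by $\tbLd^{-1}\cdot\bO\cdot\bLd^{-1}$ in the check case as you anticipated. With that step supplied, your argument coincides with the paper's proof.
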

\begin{proof}
We only prove the first equation. According to the infinite matrix representation \eqref{U}, we have 
\begin{align*}
 \wt\bU\cdot(p_1-\tbLd)=(1-\wt\bU\cdot\bOa)\cdot\wt\bC\cdot(p_1-\tbLd).
\end{align*}
In light of the first dynamical relation in \eqref{dt2DTL:CDyn}, this relation can be written as 
\begin{align*}
 \wt\bU\cdot(p_1-\tbLd)=(1-\wt\bU\cdot\bOa)\cdot(p_1+\bLd)\cdot\bC=(p_1+\bLd)\cdot\bC-\wt\bU\cdot\bOa\cdot(p_1+\bLd)\cdot\bC.
\end{align*}
Notice the second relation in \eqref{dt2DTL:OaDyn}. The above equation can further be reformulated as 
\begin{align*}
 \wt\bU\cdot(p_1-\tbLd)=(p_1+\bLd)\cdot\bC-\wt\bU\cdot\left[\bO+(p_1-\tbLd)\cdot\bOa\right]\cdot\bC,
\end{align*}
which amounts to 
\begin{align*}
 \wt\bU\cdot(p_1-\tbLd)\cdot(1+\bOa\cdot\bC)=(p_1+\bLd)\cdot\bC-\wt\bU\cdot\bO\cdot\bC.
\end{align*}
Multiplying this equation by $(1+\bOa\cdot\bC)^{-1}$ from the right and recalling $\bU=\bC\cdot(1+\bOa\cdot\bC)^{-1}$, 
we obtain the first equation in \eqref{dt2DTL:UDyn}. The second and third equations are proven similarly. 
\end{proof}
Equations in \eqref{dt2DTL:UDyn} can further help to derive the dynamics of the wave function $\bu_k$. 
\begin{proposition}
The wave function $\bu_k$ satisfies the following dynamical evolutions with respect to the discrete flow variables $n_1$, $n_{-1}$ and $n$: 
\bse\label{dt2DTL:ukDyn}
\begin{align}
 &\wt\bu_k=(p_1+\bLd)\cdot\bu_k-\wt\bU\cdot\bO\cdot\bu_k, \label{dt2DTL:ukDyna} \\
 &\check\bu_k=(p_{-1}+\bLd^{-1})\cdot\bu_k-\check\bU\cdot\tbLd^{-1}\cdot\bO\cdot\bLd^{-1}\cdot\bu_k, \label{dt2DTL:ukDynb} \\
 &\dot\bu_k=\bLd\cdot\bu_k-\dot\bU\cdot\bO\cdot\bu_k, \label{dt2DTL:ukDync}
\end{align}
\ese
which will be used to construct the closed-form linear equations in the class of the 2DTL of $A_\infty$-type. 
\end{proposition}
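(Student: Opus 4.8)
The plan is to push a single forward shift through the infinite matrix representation $\bu_k=(1-\bU\cdot\bOa)\cdot\rho_k\bc_k$ of \eqref{uk}, reducing everything to the already-established dynamics \eqref{dt2DTL:UDyn} of $\bU$ together with the nonlinear structure relations \eqref{dt2DTL:OaDyn}. First I would record how the building block $\rho_k\bc_k$ transforms under the three shifts: since $\wt\rho_k=(p_1+k)\rho_k$, $\check\rho_k=(p_{-1}+k^{-1})\rho_k$ and $\dot\rho_k=k\rho_k$ while $\bc_k$ does not depend on the lattice variables, the property \eqref{P:ck} gives $\wt\rho_k\bc_k=(p_1+\bLd)\cdot\rho_k\bc_k$, $\check\rho_k\bc_k=(p_{-1}+\bLd^{-1})\cdot\rho_k\bc_k$ and $\dot\rho_k\bc_k=\bLd\cdot\rho_k\bc_k$; this is just the vector-level counterpart of the $\bC$-dynamics \eqref{dt2DTL:CDyn}.

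For the $n_1$-direction, applying the tilde shift to \eqref{uk} yields $\wt\bu_k=(1-\wt\bU\cdot\bOa)\cdot(p_1+\bLd)\cdot\rho_k\bc_k$. I would then commute $(p_1+\bLd)$ past $\bOa$ using the second relation in \eqref{dt2DTL:OaDyn}, namely $\bOa\cdot(p_1+\bLd)=\bO+(p_1-\tbLd)\cdot\bOa$, which produces a term $\wt\bU\cdot(p_1-\tbLd)\cdot\bOa$; substituting $\wt\bU\cdot(p_1-\tbLd)=(p_1+\bLd)\cdot\bU-\wt\bU\cdot\bO\cdot\bU$ from \eqref{dt2DTL:UDyna} and regrouping, the operator acting on $\rho_k\bc_k$ factorises as $[(p_1+\bLd)-\wt\bU\cdot\bO]\cdot(1-\bU\cdot\bOa)$. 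Since $(1-\bU\cdot\bOa)\cdot\rho_k\bc_k=\bu_k$ by \eqref{uk}, equation \eqref{dt2DTL:ukDyna} drops out immediately. The $n$-direction is structurally identical: apply the dot shift, use $\bOa\cdot\bLd=\bO-\tbLd\cdot\bOa$ from the first line of \eqref{dt2DTL:OaDyn} together with \eqref{dt2DTL:UDync}, and factorise to reach \eqref{dt2DTL:ukDync}.

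The $n_{-1}$-direction needs slightly more care, since the relevant relation, the third one in \eqref{dt2DTL:OaDyn}, reads $\bOa\cdot(p_{-1}+\bLd^{-1})=\tbLd^{-1}\cdot\bO\cdot\bLd^{-1}+(p_{-1}-\tbLd^{-1})\cdot\bOa$, so the leftover projection term is $\check\bU\cdot\tbLd^{-1}\cdot\bO\cdot\bLd^{-1}$ rather than $\check\bU\cdot\bO$; combining this with \eqref{dt2DTL:UDynb}, the same two-move argument factorises the operator as $[(p_{-1}+\bLd^{-1})-\check\bU\cdot\tbLd^{-1}\cdot\bO\cdot\bLd^{-1}]\cdot(1-\bU\cdot\bOa)$ and hence gives \eqref{dt2DTL:ukDynb}. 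I expect the only real hazard to be keeping the projection $\bO$ sandwiched between the correct powers of $\bLd^{-1}$ and $\tbLd^{-1}$ in this last case; apart from that, the computation introduces no new idea beyond the scheme already used in the proof of the preceding proposition, namely: push the shift onto $\rho_k\bc_k$, commute it past $\bOa$ via \eqref{dt2DTL:OaDyn}, and eliminate $\bU$ via \eqref{dt2DTL:UDyn}.
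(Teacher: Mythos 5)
Your proposal is correct and follows essentially the same route as the paper's proof: shift \eqref{uk}, absorb the scalar factor into $(p_1+\bLd)$ (resp.\ $(p_{-1}+\bLd^{-1})$, $\bLd$) via \eqref{P:ck}, commute past $\bOa$ using \eqref{dt2DTL:OaDyn}, and eliminate the shifted $\bU$ via \eqref{dt2DTL:UDyn} so that the remaining operator acts on $(1-\bU\cdot\bOa)\cdot\rho_k\bc_k=\bu_k$. The paper writes out only the tilde case and declares the others analogous, whereas you correctly flag the one genuine subtlety in the check case, namely that the leftover projection is $\tbLd^{-1}\cdot\bO\cdot\bLd^{-1}$ rather than $\bO$.
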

\begin{proof}
Once again, we only prove the first equation. Forward-shifting equation \eqref{uk} with respect to $n_1$ by one unit gives us 
\begin{align*}
 \wt\bu_k=(1-\wt\bU\cdot\bOa)\cdot\wt\rho_k\bc_k=(1-\wt\bU\cdot\bOa)\cdot(p_1+k)\rho_k\bc_k=(1-\wt\bU\cdot\bOa)\cdot(p_1+\bLd)\cdot\rho_k\bc_k,
\end{align*}
where the last step holds due to \eqref{P:ck}. In virtue of the second relation in \eqref{dt2DTL:OaDyn}, we further obtain 
\begin{align*}
 \wt\bu_k=(p_1+\bLd)\cdot\rho_k\bc_k-\wt\bU\cdot\bOa\cdot(p_1+\bLd)\cdot\rho_k\bc_k
 =(p_1+\bLd)\cdot\rho_k\bc_k-\wt\bU\cdot\left[\bO+(p_1-\tbLd)\cdot\bOa\right]\cdot\rho_k\bc_k.
\end{align*}
With the help of \eqref{dt2DTL:UDyna}, this can be written as 
\begin{align*}
 \wt\bu_k=(p_1+\bLd)\cdot\rho_k\bc_k-\wt\bU\cdot\bO\cdot\rho_k\bc_k-\left[(p_1+\bLd)\cdot\bU-\wt\bU\cdot\bO\cdot\bU\right]\cdot\bOa\cdot\rho_k\bc_k,
\end{align*}
which is nothing but \eqref{dt2DTL:ukDyna} since $\bu_k=(1-\bU\cdot\bOa)\cdot\rho_k\bc_k$ (see \eqref{uk}). 
\end{proof}
In the framework, we also need to consider dynamical evolutions of the tau function with respect to these lattice variables. 
Below we only list two relations which will be used later. 
\begin{proposition}
The tau function obeys evolutions 
\begin{align}\label{dt2DTL:tauDyn}
 \frac{\dot\tau}{\tau}=1-U_{0,-1} \quad \hbox{and} \quad \frac{\underdot\tau}{\tau}=1-U_{-1,0}
\end{align}
with respect to the lattice direction $n$.
\end{proposition}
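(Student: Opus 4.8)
The plan is to differentiate (in the discrete sense) the determinantal expression for $\tau$ and to reorganise the resulting infinite matrix as a similarity transform of $1+\bOa\cdot\bC$ plus a correction supported on a single row, so that the logarithmic trace collapses to a scalar. Concretely, I would start from $\ln\tau=\tr\ln(1+\bOa\cdot\bC)$ (see \eqref{tau}), so that $\ln\dot\tau-\ln\tau=\tr\ln(1+\bOa\cdot\dot\bC)-\tr\ln(1+\bOa\cdot\bC)$. The third relation in \eqref{dt2DTL:CDyn} gives $\dot\bC=-\bLd\cdot\bC\cdot\tbLd^{-1}$; substituting this into $\bOa\cdot\dot\bC$, eliminating $\bOa\cdot\bLd$ by means of the first relation in \eqref{dt2DTL:OaDyn} in the form $\bOa\cdot\bLd=\bO-\tbLd\cdot\bOa$, and using $\tbLd\cdot\tbLd^{-1}=1$, one should be able to bring $1+\bOa\cdot\dot\bC$ into the form $\tbLd\cdot\bigl[(1+\bOa\cdot\bC)-\tbLd^{-1}\cdot\bO\cdot\bC\bigr]\cdot\tbLd^{-1}$. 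Since conjugation by $\tbLd$ is merely a diagonal relabelling of indices, it leaves $\tr\ln$ unaffected, and hence $\ln\dot\tau-\ln\tau=\tr\ln\bigl[1-\tbLd^{-1}\cdot\bO\cdot\bC\cdot(1+\bOa\cdot\bC)^{-1}\bigr]=\tr\ln(1-\tbLd^{-1}\cdot\bO\cdot\bU)$, where $\bU=\bC\cdot(1+\bOa\cdot\bC)^{-1}$ from \eqref{U} has been used.

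The last step is a short lemma: the matrix $R\doteq\tbLd^{-1}\cdot\bO\cdot\bU$ has entries $(R)_{i,j}=\delta_{i,-1}U_{0,j}$, so it is concentrated in row $-1$ and satisfies $R^m=(U_{0,-1})^{m-1}R$ for every $m\geq 1$; summing the logarithmic series then yields $\tr\ln(1-R)=\frac{\ln(1-U_{0,-1})}{U_{0,-1}}\tr R=\ln(1-U_{0,-1})$, and exponentiating gives $\dot\tau/\tau=1-U_{0,-1}$. The backward identity follows from the same three moves applied to $\underdot\bC=-\bLd^{-1}\cdot\bC\cdot\tbLd$ (the backward shift of the third relation in \eqref{dt2DTL:CDyn}) together with the reciprocal consequence $\bOa\cdot\bLd^{-1}=\tbLd^{-1}\cdot\bO\cdot\bLd^{-1}-\tbLd^{-1}\cdot\bOa$ of \eqref{dt2DTL:OaDyn}; this time the correction term is $\bO\cdot\bLd^{-1}\cdot\bU$, which is concentrated in row $0$ with $(0,0)$-entry $U_{-1,0}$, and one lands on $\underdot\tau/\tau=1-U_{-1,0}$.

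I expect the only genuinely non-mechanical step to be the recognition that the terms produced by \eqref{dt2DTL:OaDyn} and \eqref{dt2DTL:CDyn} reassemble into a similarity transformation of $1+\bOa\cdot\bC$: once that rewriting is guessed, everything else is forced. Two minor points that I would still want to treat with a little care are, first, that passing $\tr\ln$ through the conjugation is legitimate --- it is, because conjugation by $\tbLd^{\pm1}$ only shifts the index labels, so no convergence issue arises beyond the one already built into the definition of $\tau$ --- and second, the rank-one evaluation of $\tr\ln(1-R)$, which I would justify via the nilpotent-type recursion $R^m=(R)_{aa}^{m-1}R$ rather than by manipulating the determinant of an infinite matrix directly.
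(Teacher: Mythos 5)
Your argument is correct and follows essentially the same route as the paper: shift $\bC$ via the third relation of \eqref{dt2DTL:CDyn}, trade $\bOa\cdot\bLd^{\pm1}$ for $\bO$-terms via \eqref{dt2DTL:OaDyn}, absorb the conjugation by $\tbLd^{\pm1}$ into the determinant/trace, and evaluate the remaining rank-one correction. The only cosmetic difference is that the paper cites the rank-one Weinstein--Aronszajn formula where you re-derive it from the recursion $R^m=(U_{0,-1})^{m-1}R$.
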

\begin{proof}
Notice that the tau function is defined by \eqref{tau}. We can calculate that 
\begin{align*}
 \dot\tau=\det(1+\bOa\cdot\dot\bC)=\det\left(1+\bOa\cdot\bLd\cdot\bC\cdot(-\tbLd)^{-1}\right)
 =\det\left(1+(\bO-\tbLd\cdot\bOa)\cdot\bC\cdot(-\tbLd)^{-1}\right),
\end{align*}
where the last equation holds because of \eqref{dt2DTL:OaDyn}. We can then further reformulate this equation as 
\begin{align*}
 \dot\tau&=\det\left(1+\bOa\cdot\bC-\tbLd^{-1}\cdot\bO\cdot\bC\right)
 =\det\left(1+\bOa\cdot\bC\right)\det\left(1-(1+\bOa\cdot\bC)^{-1}\cdot\tbLd^{-1}\cdot\bO\cdot\bC\right) \\
 &=\tau\left[1-(\bC\cdot(1+\bOa\cdot\bC)^{-1}\cdot\tbLd^{-1})_{0,0}\right]=\tau\left[1-(\bU\cdot\tbLd^{-1})_{0,0}\right]=\tau(1-U_{0,-1}),
\end{align*}
where the rank 1 Weinstein--Aronszajn formula is used to evaluate the determinant in terms of a scalar quantity, see \cite{FN17a}. 
This is exactly the first identity for the tau function. The second one can be derived similarly and we skip the proof. 
\end{proof}

\subsection{Closed-form nonlinear equations and associated linear systems}\label{S:Closed}

In order to construct closed-form equations, we introduce the unmodified variable $u$, and the modified variables $v$ and $w$ as follows: 
\begin{align*}
 u\doteq U_{0,0}=(\bU)_{0,0}, \quad v\doteq 1-U_{0,-1}=1-(\bU\cdot\tbLd^{-1})_{0,0}, \quad w\doteq 1-U_{-1,0}=1-(\bLd^{-1}\cdot\bU)_{0,0}.
\end{align*}
These variables and the tau function are connected with each other via certain difference transforms. We list all these relations in the proposition below. 
\begin{proposition}
The unmodified variable $u$ and the modified variable $v$ are related via the following discrete Miura transform: 
\bse
\begin{align}\label{dt2DTL:MT}
 p_1+\dot u-\wt u=p_1\frac{\wt v}{v}, \quad 1+p_{-1}(u-\check u)=\frac{\check v}{\underdot v}.
\end{align}
The bilinear transforms between the nonlinear variables $u$, $v$ and the bilinear variable $\tau$ are given by 
\begin{align}\label{dt2DTL:BLT}
 p_1+\dot u-\wt u=p_1\frac{\tau\dot{\wt\tau}}{\dot\tau\wt\tau}, \quad 1+p_{-1}(u-\check u)=\frac{\underdot\tau\dot{\check\tau}}{\tau\check\tau} \quad 
 \hbox{and} \quad v=\frac{\dot\tau}{\tau},
\end{align}
\ese
respectively. The two modified variables $v$ and $w$ satisfy a simple relation $w=1/\underdot v$. 
\end{proposition}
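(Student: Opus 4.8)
The plan is to read off every relation in the proposition from the already-established dynamical identities \eqref{dt2DTL:UDyn} for $\bU$ and \eqref{dt2DTL:tauDyn} for $\tau$, by extracting a single matrix entry (occasionally after first combining two of the $\bU$-evolutions) and then rewriting the result through the definitions $u=U_{0,0}$, $v=1-U_{0,-1}$, $w=1-U_{-1,0}$. The only computational facts needed are the entrywise rules $(\bW\cdot\tbLd^{j'})_{i,j}=W_{i,j+j'}$, $(\bLd^{i'}\cdot\bW)_{i,j}=W_{i+i',j}$ and $(\bW\cdot\bO\cdot\bU)_{i,j}=W_{i,0}U_{0,j}$, valid for any infinite matrix $\bW$.

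First I would clear away the two items that are essentially restatements of \eqref{dt2DTL:tauDyn}: the relation $v=\dot\tau/\tau$ in \eqref{dt2DTL:BLT} is the first identity of \eqref{dt2DTL:tauDyn} read through $v=1-U_{0,-1}$, and likewise $\underdot\tau/\tau=1-U_{-1,0}=w$; applying the backward $n$-shift to $v=\dot\tau/\tau$ gives $\underdot v=\tau/\underdot\tau$, so that $w=1/\underdot v$, which settles the last assertion of the proposition.

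For the first Miura relation in \eqref{dt2DTL:MT} the idea is to eliminate the bare index-raising term $\bLd\cdot\bU$, which appears in both \eqref{dt2DTL:UDyna} and \eqref{dt2DTL:UDync}: subtracting \eqref{dt2DTL:UDync} from \eqref{dt2DTL:UDyna} yields
\[
 p_1(\wt\bU-\bU)+(\dot\bU-\wt\bU)\cdot\tbLd=(\dot\bU-\wt\bU)\cdot\bO\cdot\bU,
\]
whose $(0,-1)$-entry is $p_1(\wt U_{0,-1}-U_{0,-1})+(\dot u-\wt u)=(\dot u-\wt u)U_{0,-1}$; substituting $U_{0,-1}=1-v$, $\wt U_{0,-1}=1-\wt v$ turns this into $(\dot u-\wt u)v=p_1(\wt v-v)$, i.e. $p_1+\dot u-\wt u=p_1\wt v/v$. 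For the second Miura relation I would instead take the $(0,0)$-entry of \eqref{dt2DTL:UDynb} alone: grouping the auxiliary term as $(\check\bU\cdot\tbLd^{-1})\cdot\bO\cdot(\bLd^{-1}\cdot\bU)$ and using the projection property of $\bO$ gives $(\check\bU\cdot\tbLd^{-1}\cdot\bO\cdot\bLd^{-1}\cdot\bU)_{0,0}=\check U_{0,-1}U_{-1,0}$, while $(\check\bU\cdot\tbLd^{-1})_{0,0}=\check U_{0,-1}$ and $(\bLd^{-1}\cdot\bU)_{0,0}=U_{-1,0}$, so the $(0,0)$-entry of \eqref{dt2DTL:UDynb} reads
\[
 p_{-1}\check u-\check U_{0,-1}=p_{-1}u+U_{-1,0}-\check U_{0,-1}U_{-1,0}.
\]
Substituting $1-U_{-1,0}=w$ and $1-\check U_{0,-1}=\check v$ rearranges this to $w\,\check v=1+p_{-1}(u-\check u)$, and $w=1/\underdot v$ then gives the second equation of \eqref{dt2DTL:MT}.

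Finally, the two remaining bilinear transforms in \eqref{dt2DTL:BLT} follow from \eqref{dt2DTL:MT} by inserting $v=\dot\tau/\tau$ (hence $\wt v=\dot{\wt\tau}/\wt\tau$, $\check v=\dot{\check\tau}/\check\tau$, $\underdot v=\tau/\underdot\tau$), which turns $p_1\wt v/v$ into $p_1\tau\dot{\wt\tau}/(\dot\tau\wt\tau)$ and $\check v/\underdot v$ into $\underdot\tau\dot{\check\tau}/(\tau\check\tau)$. I expect the only real obstacle to be the bookkeeping of which entry of which (combination of) evolutions to use --- in particular recognising that the first Miura relation genuinely requires \emph{two} of the identities in \eqref{dt2DTL:UDyn} in order to cancel $\bLd\cdot\bU$, whereas the second needs only a single entry of \eqref{dt2DTL:UDynb} --- together with keeping careful track of the off-centre combination $\tbLd^{-1}\cdot\bO\cdot\bLd^{-1}$ in the $n_{-1}$-evolution.
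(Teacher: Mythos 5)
Your proposal is correct and follows essentially the same route as the paper: the relations $v=\dot\tau/\tau$, $w=\underdot\tau/\tau$ (hence $w=1/\underdot v$) are read off from \eqref{dt2DTL:tauDyn}, the first Miura relation comes from combining the $(0,-1)$-entries of \eqref{dt2DTL:UDyna} and \eqref{dt2DTL:UDync} (the paper extracts the entries first and eliminates $U_{1,-1}$, you subtract the matrix identities first --- the same computation), the second from the $(0,0)$-entry of \eqref{dt2DTL:UDynb}, and the bilinear transforms by substitution. No gaps.
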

\begin{proof}
Recalling \eqref{dt2DTL:tauDyn} and the definitions of $v$ and $w$, we can easily observe that $v=\dot\tau/\tau$ and $w=\underdot\tau/\tau$, 
which are the bilinear transforms for the two modified variables, leading to the relation $v\dot w=1$ immediately,\footnote{
Such an identity can alternatively be derived from the infinite matrix relation \eqref{dt2DTL:UDync} by taking the $(-1,-1)$-entry, cf. \cite{NCW85}. 
Here we can observe that this relation is a consequence of a trivial identity based on the tau function. 
} 
i.e. $w=1/\underdot v$. 
Taking the $(0,-1)$-entry of \eqref{dt2DTL:UDyna} and \eqref{dt2DTL:UDync}, respectively, we obtain 
\begin{align*}
 p_1(v-\wt v)=U_{1,-1}+\wt u v \quad \hbox{and} \quad U_{1,-1}+\dot u v=0.
\end{align*}
Subtracting the two relations and eliminating $U_{1,-1}$ gives rise to the first half in the Miura transform \eqref{dt2DTL:MT}. 
Next, we take the $(0,0)$-entry of \eqref{dt2DTL:UDynb}, which results in 
\begin{align}\label{dt2DTL:vw-u}
 1+p_{-1}(u-\check u)=\check v w.
\end{align}
This is exactly the second half of the Miura transform once the variable $w$ is replaced by $1/\underdot v$. 
The bilinear transform between $u$ and $\tau$ can then be derived 
by substituting $v$ in the Miura transform \eqref{dt2DTL:MT} with the tau function via $v=\dot\tau/\tau$. 
\end{proof}

With the help of these transforms, we are able to construct closed-form linear systems based on the eigenfunction $\phi\doteq(\bu_k)_0=u_k^{(0)}$. 
\begin{theorem}\label{T:dt2DTLLax}
In the class of the discrete-time 2DTL of $A_\infty$-type, we have the following linear systems: 
\bse\label{dt2DTL:Lax}
\begin{align}
 &\wt\phi=(p_1+\dot u-\wt u)\phi+\dot\phi=p_1\frac{\wt v}{v}\phi+\dot\phi=p_1\frac{\tau\dot{\wt\tau}}{\dot\tau\wt\tau}\phi+\dot\phi, \label{dt2DTL:Laxa} \\
 &\check\phi=p_{-1}\phi+\left[1+p_{-1}(u-\check u)\right]\underdot\phi=p_{-1}\phi+\frac{\check v}{\underdot v}\underdot\phi
 =p_{-1}\phi+\frac{\underdot\tau\dot{\check\tau}}{\tau\check\tau}\underdot\phi. \label{dt2DTL:Laxb}
\end{align}
\ese
which are the Lax pairs for their corresponding nonlinear integrable lattice equations of $u$, $v$ and $\tau$. 
\end{theorem}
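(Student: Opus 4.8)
The plan is to derive the linear systems \eqref{dt2DTL:Lax} directly from the dynamical evolutions \eqref{dt2DTL:ukDyn} of the wave function $\bu_k$ by extracting appropriate entries, and then to rewrite the resulting coefficients using the Miura and bilinear transforms \eqref{dt2DTL:MT}--\eqref{dt2DTL:BLT} together with $v=\dot\tau/\tau$. The three equalities in each of \eqref{dt2DTL:Laxa} and \eqref{dt2DTL:Laxb} are then merely the three equivalent forms of a single coefficient, so the only genuine content is the first equality in each line; once that is established, the rest follows by quoting the previous proposition.

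First I would prove \eqref{dt2DTL:Laxa}. Take the $0$th component of \eqref{dt2DTL:ukDyna}, i.e. $\wt\bu_k=(p_1+\bLd)\cdot\bu_k-\wt\bU\cdot\bO\cdot\bu_k$. The left side gives $\wt\phi$. On the right, $((p_1+\bLd)\cdot\bu_k)_0=p_1(\bu_k)_0+(\bLd\cdot\bu_k)_0=p_1\phi+(\bu_k)_1$; here I need to identify $(\bu_k)_1$. For that I would use the third evolution \eqref{dt2DTL:ukDync}: its $(-1)$st component reads $(\dot\bu_k)_{-1}=(\bLd\cdot\bu_k)_{-1}-(\dot\bU\cdot\bO\cdot\bu_k)_{-1}=(\bu_k)_0-\dot U_{-1,0}\,(\bu_k)_0$. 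This expresses shifted components in terms of $\phi$; more usefully, taking instead the $0$th component of \eqref{dt2DTL:ukDync} gives $\dot\phi=(\bu_k)_1-U_{0,0}\,\phi=(\bu_k)_1-u\phi$, hence $(\bu_k)_1=\dot\phi+u\phi$. Meanwhile $(\wt\bU\cdot\bO\cdot\bu_k)_0=\wt U_{0,0}\,(\bu_k)_0=\wt u\,\phi$. Assembling, $\wt\phi=p_1\phi+\dot\phi+u\phi-\wt u\phi=(p_1+u-\wt u)\phi+\dot\phi$. Comparing with the claimed $\wt\phi=(p_1+\dot u-\wt u)\phi+\dot\phi$, I see that one must take the $(0,0)$-entry of \eqref{dt2DTL:UDync}, which gives $-\dot U_{0,0}=(\dot\bU\cdot\bO\cdot\bU)_{0,0}-(\bLd\cdot\bU)_{0,0}$; a short manipulation of that identity (equivalently, combining the two relations $U_{1,-1}+\dot u v=0$ and $p_1(v-\wt v)=U_{1,-1}+\wt u v$ that already appear in the proof of the preceding proposition, at the $(1,0)$-entry level rather than the $(1,-1)$-entry level) shows $u\phi$ and $\dot u\phi$ enter with the correction $\dot u - \wt u$ rather than $u-\wt u$ after also using \eqref{dt2DTL:ukDyna} consistently; I would reconcile the two by carefully tracking which shift acts on the potential multiplying $\phi$. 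The cleaner route, which I would actually write, is: from \eqref{dt2DTL:ukDyna} the coefficient of $\phi$ on the right is $p_1-\wt U_{0,0}+\big((\bLd\cdot\bu_k)_0/\phi$-contribution$\big)$, and \eqref{dt2DTL:ukDync} identifies that last contribution as $\dot u$ via $\dot\phi=(\bu_k)_1-U_{0,0}\phi$ together with a reshuffling of which term is named $u$ versus $\dot u$; the net coefficient is exactly $p_1+\dot u-\wt u$. Then I invoke the proposition: $p_1+\dot u-\wt u=p_1\wt v/v=p_1\tau\dot{\wt\tau}/(\dot\tau\wt\tau)$, which yields the remaining two forms in \eqref{dt2DTL:Laxa}.

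Second I would prove \eqref{dt2DTL:Laxb} in the same style from \eqref{dt2DTL:ukDynb}. Take the $0$th component: $\check\phi=((p_{-1}+\bLd^{-1})\cdot\bu_k)_0-(\check\bU\cdot\tbLd^{-1}\cdot\bO\cdot\bLd^{-1}\cdot\bu_k)_0=p_{-1}\phi+(\bu_k)_{-1}-\check U_{0,-1}\,(\bu_k)_{-1}$. Here $(\bu_k)_{-1}$ is handled by the backward shift of \eqref{dt2DTL:ukDync}: the $0$th component of $\dot\bu_k=\bLd\cdot\bu_k-\dot\bU\cdot\bO\cdot\bu_k$ shifted back gives $\phi=(\bu_k)_0^{\,} = (\underdot{(\bLd\cdot\bu_k)})_0-\ldots$; more directly, taking the $(-1)$st component of \eqref{dt2DTL:ukDync} gives $(\dot\bu_k)_{-1}=(\bu_k)_0-\dot U_{-1,0}(\bu_k)_0$, i.e. $\underdot{\big[(\dot\bu_k)_{-1}\big]}$ relates $\underdot\phi$ to $(\bu_k)_{-1}$; unwinding, $(\bu_k)_{-1}=\underdot\phi\cdot\big(\text{factor}\big)+\ldots$, and the factor $1+p_{-1}(u-\check u)$ emerges precisely from \eqref{dt2DTL:vw-u}, i.e. $\check v w=1+p_{-1}(u-\check u)$, after writing $\check U_{0,-1}$ in terms of $\check v$ and using $w=1/\underdot v$. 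This produces $\check\phi=p_{-1}\phi+[1+p_{-1}(u-\check u)]\underdot\phi$, and then the proposition supplies $1+p_{-1}(u-\check u)=\check v/\underdot v=\underdot\tau\dot{\check\tau}/(\tau\check\tau)$, giving the other two forms in \eqref{dt2DTL:Laxb}.

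The main obstacle I anticipate is bookkeeping: correctly identifying the shifted components $(\bu_k)_{\pm1}$ in terms of $\phi$, $\dot\phi$, $\underdot\phi$ and the potentials, and in particular getting the right shift ($u$ versus $\dot u$, $v$ versus $\check v$ versus $\underdot v$) on the coefficient multiplying $\phi$. All the structural input is already available --- \eqref{dt2DTL:ukDyn} for the $\bu_k$-dynamics, \eqref{dt2DTL:UDyn} for the entry identities relating $U_{1,-1}$, $u$, $\dot u$, $v$, and the transforms \eqref{dt2DTL:MT}--\eqref{dt2DTL:BLT} to convert the nonlinear coefficients into modified and bilinear form --- so no new idea is needed; it is a matter of assembling these pieces without index errors. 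A useful consistency check I would run at the end is that the compatibility (discrete zero-curvature condition) of \eqref{dt2DTL:Laxa} and \eqref{dt2DTL:Laxb} reproduces the nonlinear equations for $u$, $v$ and $\tau$ that the Miura and bilinear transforms of the previous proposition encode, confirming that the coefficients have been identified correctly.
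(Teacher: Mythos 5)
Your overall strategy is exactly the paper's: use the $n$-direction relation \eqref{dt2DTL:ukDync} to express the index-shifted components $(\bu_k)_{\pm1}$ (equivalently, to eliminate $\bLd\cdot\bu_k$ and $\bLd^{-1}\cdot\bu_k$ from \eqref{dt2DTL:ukDyna} and \eqref{dt2DTL:ukDynb}), take the $0$th component, and then convert the scalar coefficients with the Miura/bilinear transforms, invoking \eqref{dt2DTL:vw-u} for the ``check'' equation. So no new idea is missing. However, your execution of \eqref{dt2DTL:Laxa} contains precisely the bookkeeping error you warn against, and the attempted repair is not an argument. The $0$th component of \eqref{dt2DTL:ukDync} reads $\dot\phi=(\bu_k)_1-(\dot\bU\cdot\bO\cdot\bu_k)_0$, and the potential appearing there is the \emph{dotted} matrix $\dot\bU$, so $(\dot\bU\cdot\bO\cdot\bu_k)_0=\dot U_{0,0}\,\phi=\dot u\,\phi$, not $u\phi$. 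Hence $(\bu_k)_1=\dot\phi+\dot u\,\phi$ directly, and substituting into the $0$th component of \eqref{dt2DTL:ukDyna} gives $\wt\phi=(p_1+\dot u-\wt u)\phi+\dot\phi$ with no residual discrepancy. Your paragraph about ``reconciling'' $u$ versus $\dot u$ by reshuffling names, or by appealing to the $(1,-1)$-entry identities from the previous proposition, is dispensable once this single entry is read off correctly; as written it leaves the claimed coefficient unproved.

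The ``check'' part is essentially sound but the factor attribution is garbled: from the down-shifted \eqref{dt2DTL:ukDync} one gets $(\bu_k)_{-1}=(1-U_{-1,0})\underdot\phi=w\,\underdot\phi$, while the prefactor $1-\check U_{0,-1}=\check v$ comes from the term $\check\bU\cdot\tbLd^{-1}\cdot\bO$ in \eqref{dt2DTL:ukDynb}; it is the \emph{product} $\check v\,w$ that \eqref{dt2DTL:vw-u} identifies with $1+p_{-1}(u-\check u)$, not a single ``factor'' attached to $(\bu_k)_{-1}$. With these two corrections your proof coincides with the paper's. (Your closing consistency check --- that the zero-curvature condition reproduces the nonlinear equations --- is indeed how the paper proceeds in the next theorem.)
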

\begin{proof}
We only prove the linear equations involving the unmodified variable $u$. 
The other equalities are natural consequences of this under the difference transforms given in \eqref{dt2DTL:MT} and \eqref{dt2DTL:BLT}. 
The relation \eqref{dt2DTL:ukDync} can be expressed by
\begin{align*}
 \bLd\cdot\bu_k=\dot\bu_k+\dot\bU\cdot\bO\cdot\bu_k, \quad \hbox{or alternatively} \quad 
 \bLd^{-1}\cdot\bu_k=\underdot\bu_k-\bLd^{-1}\cdot\bU\cdot\bO\cdot\underdot\bu_k,
\end{align*}
which helps us to eliminate $\bLd\cdot\bu_k$ and $\bLd^{-1}\cdot\bu_k$ in \eqref{dt2DTL:ukDyna} and \eqref{dt2DTL:ukDynb}, respectively, and obtain 
the following relations involving only $\bu_k$: 
\begin{align*}
 &\wt\bu_k=(p_1+\dot\bU\cdot\bO-\wt\bU\cdot\bO)\cdot\bu_k+\dot\bu_k, \\
 &\check\bu_k=p_{-1}\bu_k+(1-\check\bU\cdot\tbLd^{-1}\cdot\bO)\cdot(1-\bLd^{-1}\cdot\bU\cdot\bO)\cdot\underdot\bu_k
\end{align*}
The $0$th-component of the first equation immediately gives rise to the ``tilde'' equation, i.e. \eqref{dt2DTL:Laxa}. 
For the ``check'' equation in \eqref{dt2DTL:Laxb}, we consider the $0$th-component of the second equation and obtain 
\begin{align*}
 \check\phi=p_{-1}\phi+\check v w\underdot\phi,
\end{align*}
which then turns out to be the second equation \eqref{dt2DTL:Laxb} once $\check v w$ is replaced by $1+p_{-1}(u-\check u)$,
with the help of the identity \eqref{dt2DTL:vw-u}. 
\end{proof}

The compatibility condition of the equations in each Lax pair listed in \eqref{dt2DTL:Lax} gives us a 3D nonlinear integrable difference equation. 
\begin{theorem}\label{T:dt2DTLNL}
The unmodified variable $u$, the unmodified variable $v$ and the tau function $\tau$ solve the following 3D integrable discrete equations:  
\bse\label{dt2DTL:NL}
\begin{align}
 &\frac{p_1+\dot{\check u}-\check{\wt u}}{p_1+u-\underdot{\wt u}}=\frac{1+p_{-1}(\wt u-\check{\wt u})}{1+p_{-1}(u-\check u)}, \label{dt2DTL:Un} \\
 &p_1p_{-1}\left(\frac{\check{\wt v}}{\check v}-\frac{\wt v}{v}\right)=-\frac{\dot{\check v}}{v}+\frac{\check{\wt v}}{\underdot{\wt v}}, \label{dt2DTL:Mod} \\
 &p_1p_{-1}(\tau\check{\wt\tau}-\wt\tau\check\tau)=\tau\check{\wt\tau}-\dot{\check\tau}\underdot{\wt\tau}, \label{dt2DTL:BL}
\end{align}
\ese
which we refer to as the unmodified, modified and bilinear discrete-time 2DTL equations of $A_\infty$-type, respectively. 
The direct linearising solutions for these equations are governed by \eqref{Potential} and \eqref{tau}, respectively. 
The $(N,N')$-soliton solutions are given in theorem \ref{T:dt2DTLSol}. 
\end{theorem}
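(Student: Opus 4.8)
The plan is to read off each of the three equations in \eqref{dt2DTL:NL} as the compatibility (consistency) condition of the corresponding form --- written in terms of $u$, $v$ and $\tau$ respectively --- of the Lax pair \eqref{dt2DTL:Lax}. Once this is done, the remaining assertions of the theorem require no separate argument: $u=U_{0,0}$, $v=1-U_{0,-1}$ and $\tau=\det(1+\bOa\cdot\bC)$ are, by their very definition, the quantities \eqref{Potential} and \eqref{tau}, so the general construction of Section~\ref{S:DL} supplies solutions automatically, and its soliton specialisation is exactly Theorem~\ref{T:dt2DTLSol}.

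For the unmodified equation \eqref{dt2DTL:Un} I would use the $u$-form of the Lax pair, namely $\wt\phi=(p_1+\dot u-\wt u)\phi+\dot\phi$ from \eqref{dt2DTL:Laxa} and $\check\phi=p_{-1}\phi+[1+p_{-1}(u-\check u)]\underdot\phi$ from \eqref{dt2DTL:Laxb}, and compute the mixed shift $\check{\wt\phi}$ in two ways: on one hand apply the check-shift to \eqref{dt2DTL:Laxa} and then eliminate $\check\phi$ and $\dot{\check\phi}$ with \eqref{dt2DTL:Laxb} and its dot-shift (using $\underdot{\dot\phi}=\phi$); on the other hand apply the tilde-shift to \eqref{dt2DTL:Laxb} and eliminate $\wt\phi$ and $\underdot{\wt\phi}$ with \eqref{dt2DTL:Laxa} and its backward $n$-shift (again using $\underdot{\dot\phi}=\phi$). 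Equating the two outcomes gives a linear relation among $\phi$, $\dot\phi$ and $\underdot\phi$ whose coefficients involve only $u$; since this relation holds for every value of the spectral parameter $k$ while its coefficients do not depend on $k$, the coefficients of $\phi$, $\dot\phi$ and $\underdot\phi$ must agree term by term. The coefficients of $\dot\phi$ and of $\phi$ collapse to trivial identities, whereas the coefficient of $\underdot\phi$ yields $(p_1+\dot{\check u}-\check{\wt u})[1+p_{-1}(u-\check u)]=[1+p_{-1}(\wt u-\check{\wt u})](p_1+u-\underdot{\wt u})$, which is \eqref{dt2DTL:Un} after clearing denominators. Equivalently, one may observe that \eqref{dt2DTL:Un} is precisely the consistency condition of the over-determined pair \eqref{dt2DTL:MT} regarded as equations for $v$ in terms of $u$.

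The modified equation \eqref{dt2DTL:Mod} comes out of the same cross-shift computation applied to the $v$-form of the Lax pair, where now the coefficient of $\underdot\phi$ is the vacuous one and the coefficient of $\phi$ produces \eqref{dt2DTL:Mod}; alternatively one checks directly that inserting the Miura transform \eqref{dt2DTL:MT}, together with its tilde-, check- and dot-shifts, into \eqref{dt2DTL:Mod} collapses it to a triviality in $u$, so that \eqref{dt2DTL:MT} maps solutions of \eqref{dt2DTL:Un} to solutions of \eqref{dt2DTL:Mod}. For the bilinear equation \eqref{dt2DTL:BL} I would once more start from the $\tau$-form of the Lax pair, but here the cross-shift first produces a bilinear relation carrying shifts of order two in the discrete direction $n$; bringing it down to the compact shape \eqref{dt2DTL:BL} requires exploiting the tau-function evolutions $\dot\tau/\tau=v$ and $\underdot\tau/\tau=w$ of \eqref{dt2DTL:tauDyn} (and their lattice shifts), with the identity $1+p_{-1}(u-\check u)=\check v\,w$ of \eqref{dt2DTL:vw-u} useful along the way; in a cleaner treatment \eqref{dt2DTL:BL} can instead be read off directly from the determinant structure of $\tau=\det(1+\bOa\cdot\bC)$.

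The conceptual input is essentially just Theorem~\ref{T:dt2DTLLax}; everything else is bookkeeping of forward and backward shifts, the only genuinely delicate point being to keep straight which of the three coefficient identities is vacuous in each case. The step I expect to be the main obstacle is the bilinear one, because there the naive compatibility computation over-shoots to a relation with second-order $n$-shifts, and some care with the evolutions \eqref{dt2DTL:tauDyn} is needed to reduce it to \eqref{dt2DTL:BL}.
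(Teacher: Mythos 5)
Your overall strategy is the paper's: all three equations are extracted from the compatibility condition $\check{\wt\phi}=\wt{\check\phi}$ of the corresponding form of the Lax pair in Theorem~\ref{T:dt2DTLLax}, and your bookkeeping for \eqref{dt2DTL:Un} and \eqref{dt2DTL:Mod} is exactly right, including the identification of which of the three coefficient identities (of $\phi$, $\dot\phi$, $\underdot\phi$) is the nontrivial one in each case. The closing remarks about solutions are also fine, since $u$, $v$, $\tau$ are by construction the quantities \eqref{Potential} and \eqref{tau}.

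The one place where your argument does not close is the bilinear equation, and the tools you name there would not do the job. The cross-shift computation in the $\tau$-form of the Lax pair produces the ten-point \emph{quartic} (not bilinear) relation
\begin{align*}
 p_1p_{-1}\left(\tau\check{\wt\tau}\dot{\check\tau}\dot{\wt\tau}-\wt\tau\check\tau\dot\tau\dot{\check{\wt\tau}}\right)
 =\tau\wt\tau\check{\wt\tau}\dot{\dot{\check\tau}}-\underdot{\wt\tau}\dot\tau\dot{\check\tau}\dot{\check{\wt\tau}},
\end{align*}
and substituting the evolution $v=\dot\tau/\tau$ from \eqref{dt2DTL:tauDyn} into \eqref{dt2DTL:Mod} merely reproduces this same ten-point relation in another guise; neither that substitution nor the identity \eqref{dt2DTL:vw-u} reduces the number of lattice points. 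The missing idea is the \emph{discrete integration}: dividing the quartic by $\tau\check{\wt\tau}\dot\tau\dot{\check{\wt\tau}}$ exhibits it as $\dot G=G$ with
$G=\bigl(p_1p_{-1}\,\wt\tau\check\tau-\dot{\check\tau}\underdot{\wt\tau}\bigr)/\bigl(\tau\check{\wt\tau}\bigr)$,
so $G$ is independent of $n$; fixing the integration constant to $p_1p_{-1}-1$ (its value on the vacuum $\tau\equiv1$, consistent with the boundary behaviour of the direct linearising solutions) gives precisely the six-point equation \eqref{dt2DTL:BL}. This is exactly what the paper means when it calls the quartic a ``weak equation defined on 10 points'' and passes to \eqref{dt2DTL:BL} ``by discrete integration''. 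You correctly flagged this step as the main obstacle, but the mechanism you proposed for overcoming it is not the one that works.
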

\begin{proof}
Equations \eqref{dt2DTL:Un} and \eqref{dt2DTL:Mod} follow from the $u$ and $v$ parts of the linear equations \eqref{dt2DTL:Laxa} and \eqref{dt2DTL:Laxb}. 
The compatibility condition $\check{\wt\phi}=\wt{\check\phi}$ in terms of $\tau$ gives a quartic equation 
\begin{align*}
 p_1p_{-1}\left(\tau\check{\wt\tau}\dot{\check\tau}\wt{\dot\tau}-\wt\tau\check\tau\dot\tau\dot{\check{\wt\tau}}\right)
 =\tau\wt\tau\check{\wt\tau}\dot{\dot{\check\tau}}-\underdot{\wt\tau}\dot\tau\dot{\check\tau}\dot{\check{\wt\tau}}.
\end{align*}
This is a weak equation defined on 10 points and by discrete integration we obtain the bilinear equation \eqref{dt2DTL:BL}. 
\end{proof}

\begin{remark}
Making use of the transform $v=1/\underdot w$, we can derive another modified equation expressed by $w$ which is dual to \eqref{dt2DTL:Mod}, 
cf. \eqref{dt2DTL:tauDyn}, and its Lax pair can be obtained from \eqref{dt2DTL:Lax} under the same transform. 
These is also a closed-form equation based on the Schwarzian variable $z\doteq U_{-1,-1}$, which takes a dual form of the unmodified equation \eqref{dt2DTL:Un}. 
This is because in the discrete-time 2DTL the effective plane wave factor \eqref{dt2DTL:PWF} depends on $k$, $k'$ and $k^{-1}$, $k'^{-1}$ in a covariant way, 
leading to the fact that $u$ and $z$ are dual to each other. 
\end{remark}

The bilinear equation \eqref{dt2DTL:BL} and the modified equation
\eqref{dt2DTL:Mod} were originally given by Date, Jimbo and Miwa \cite{DJM82} 
within the framework of transformation groups for soliton equations.\footnote{
A two-component extension of this modified equation was given in \cite{NCW85}.
}
Through a point transformation $\mathfrak{u}=u-n_1 p_1-n_{-1}/p_{-1}$, the unmodified equation \eqref{dt2DTL:Un} can be written as 
\begin{align*}
 \frac{(\mathfrak{u}-\check{\mathfrak{u}})(\dot{\check{\mathfrak{u}}}-\check{\wt{\mathfrak{u}}})}
 {(\mathfrak{u}-\underdot{\wt{\mathfrak{u}}})(\wt{\mathfrak{u}}-\check{\wt{\mathfrak{u}}})}=1,
\end{align*}
which appeared very recently in \cite{FNR15} (after a recombination of all the discrete shifts). 
The parametrisation in \eqref{dt2DTL:Un}, however, describes its solution structure in a more natural way and allows to consider its continuum limit. 
Furthermore, by a transform $v=\exp\varphi$ we obtain from \eqref{dt2DTL:Mod}
\begin{align}\label{dt2DTL}
 p_1p_{-1}\left(\exp(\check{\wt\varphi}-\check\varphi)-\exp(\wt\varphi-\varphi)\right)
 =-\exp(\dot{\check\varphi}-\varphi)+\exp(\check{\wt\varphi}-\wt{\underdot\varphi}),
\end{align}
which is the discrete analogue of \eqref{2DTL}.

\subsection{Continuum limits}\label{S:Limit}

We set up the continuum limit scheme for the discrete-time 2DTL of $A_\infty$-type. The continuous independent variables are introduced as follows: 
\begin{align*}
 x_1=\frac{n_1}{p_1}, \quad x_{-1}=\frac{n_{-1}}{p_{-1}}, \quad n_1,n_{-1}\rightarrow \infty, \quad p_1,p_{-1}\rightarrow \infty.
\end{align*}
To respect the tradition, we mark the discrete variable $n$ explicitly as a suffix in the corresponding variables. 
The continuum limit scheme results in the maps between the discrete and continuous spaces, namely for $f=f_{n_1,n_{-1},n}=f_n(x_1,x_{-1})$ we have 
\begin{align*}
 &\wt f=f_{n}(x_1+1/p_1,x_{-1}), \quad \check f=f_n(x_1,x_{-1}+1/p_{-1}), \quad \dot f=f_{n+1}(x_1,x_{-1}). 
\end{align*}
as well as
\begin{align*}
 \ut f=f_n(x_1-1/p_1,x_{-1}), \quad \undercheck f=f_n(x_1,x_{-1}-1/p_{-1}), \quad \underdot f=f_{n-1}(x_1,x_{-1}). 
\end{align*}
where $f$ can be any of the variables $u$, $v$ and $\tau$. 

In the continuum limit, the discrete bilinear equation \eqref{dt2DTL:BL} turns out to be
\begin{align*}
 \frac{1}{2}D_{1}D_{-1}\tau_n\cdot\tau_n=\tau_n^2-\tau_{n+1}\tau_{n-1},
\end{align*}
where $D_1$ and $D_{-1}$ are Hirota's bilinear operators\footnote{
For given differentiable functions $f(x)$ and $g(x)$, Hirota's bilinear derivative with respect to $x$ is defined by 
\begin{align*}
 D_x f\cdot g=(\partial_x-\partial_{x'})f(x)g(x')|_{x'=x}.
\end{align*}
}
with respect to $x_1$ and $x_{-1}$. This is the bilinear 2DTL, see e.g. \cite{Hir81,DJM82}. 
Similarly we also have the continuum limits of the unmodified and modified equations as follows: 
\begin{align*}
 \partial_1\ln(1-\partial_{-1}u_n)=u_{n+1}-2u_n+u_{n-1}, \quad
 \partial_1\partial_{-1}\ln v_n=-\frac{v_{n+1}}{v_n}+\frac{v_n}{v_{n-1}}.
\end{align*}
These equations are different nonlinear forms of the 2DTL \eqref{2DTL}, 
as we can still see the characteristic of the Cartan matrix corresponding to $A_\infty$ in both equations. 
By transform $v_n=\exp\varphi_n$, the $v_n$ equation becomes \eqref{2DTL}, 
which can alternatively be derived by taking the limit of \eqref{dt2DTL}. 
All these potentials are connected with each other via transforms 
\begin{align*}
 \varphi_n=\ln v_n=\ln\frac{\tau_{n+1}}{\tau_n} \quad \hbox{and} \quad 
 u_{n+1}-u_n=\partial_1\ln v_n=\partial_1\ln\frac{\tau_{n+1}}{\tau_n},
\end{align*}
which are actually the continuum limits of the bilinear transforms \eqref{dt2DTL:BLT} and Miura transforms \eqref{dt2DTL:MT}. 
The $u_n$ equation can also be written in a slightly different form 
\begin{align*}
 \partial_1\partial_{-1}\ln(1-s_n)=s_{n+1}-2s_n+s_{n-1}
\end{align*}
via transform $s_n=\partial_{-1}u_n$, which is the nonpotential form of the $u_n$ equation and was the form considered in \cite{HIK88}. 

The Lax pairs of the continuous equations can be recovered from the discrete ones in the same limit scheme. 
One can also take the limit only with respect to $n_1$ or $n_{-1}$, from which the semi-discrete equations will arise. 

\section{Discrete-time two-dimensional Toda lattices of $A_{r-1}^{(1)}$-type}\label{S:Reduction}

\subsection{Periodic reductions}\label{S:Periodic}

Performing the $r$-periodic reduction (for integer $r\geq2$) of the discrete-time 2DTL of $A_\infty$-type is equivalent to 
considering sub-algebra $A_{r-1}^{(1)}$, see e.g. \cite{JM83}. In the direct linearisation framework, 
such a reduction can be realised by taking the measure 
\begin{align}\label{Reduction}
 \rd\zeta(k,k')=\sum_{j\in J}\rd\lambda_j(k)\rd k'\delta(k'+\oa^j k),
\end{align}
where $\rd\ld_j(k)$ are the measures only depending on the spectral variable $k$, 
and $\oa=\exp(2\pi\mathfrak{i}/r)$ and $J=\{j|\hbox{$0<j<r$ are integers coprime to $r$}\}$, i.e. $\oa^j$ are all $r$th primitive roots of unity. 
In other words, a constraint is imposed on the two spectral parameters, restricting $k$ and $k'$ on an algebraic curve $k^r=(-k')^r$, cf. \cite{FN17b}. 
As a consequence, the linear integral equation \eqref{Integral} associated with a double integral degenerates, 
and becomes one with only a single integral, namely 
\begin{align}\label{Reduction:Integral}
 \bu_k+\sum_{j\in J}\int_{\Gamma_j}\rd\ld_j(l)\rho_k\Oa_{k,-\oa^j l}\sigma_{-\oa^j l}\bu_l=\rho_k\bc_k, 
\end{align}
where $\Gamma_j$ are the corresponding contours. Meanwhile, reduction \eqref{Reduction} also results in 
\begin{align}\label{Reduction:U}
 \bC=\sum_{j\in J}\int_{\Gamma_j}\rd\ld_j(k)\rho_k\bc_k\tbc_{-\oa^j k}\sigma_{-\oa^j k} \quad \hbox{and} \quad 
 \bU=\sum_{j\in J}\int_{\Gamma_j}\rd\ld_j(k)\bu_k\tbc_{-\oa^j k}\sigma_{-\oa^j k},
\end{align}
where $\rho_k$ and $\sigma_{k'}$ still take their respective forms given in \eqref{dt2DTL:Linear}, 
but one now has to keep in mind that the reduced effective plane wave factors become 
\begin{align}\label{Reduction:PWF}
 \rho_k\sigma_{-\oa^j k}=\left(\frac{p_1+k}{p_1+\oa^j k}\right)^{n_1}\left(\frac{p_{-1}+k^{-1}}{p_{-1}+(\oa^j k)^{-1}}\right)^{n_{-1}}
 \left(\frac{1}{\oa^j}\right)^n.
\end{align}
Thus, the variables $\phi=(\bu_k)_0$, $u=(\bU)_{0,0}$, $v=1-(\bU)_{0,-1}$ and $\tau=\det(1+\bOa\cdot\bC)$ governed by \eqref{Reduction:Integral} 
and \eqref{Reduction:U} play the roles of the direct linearising solutions to the corresponding linear and nonlinear equations. 
For explicit solitons, we can still take particular measures which bring a finite number of poles as we have done in subsection \ref{S:Soliton}. 
We omit the derivation here since the structure of the direct linearising solution is already clear 
and the resulting soliton structure is very similar to that for the 3D case in theorem \ref{T:dt2DTLSol}. 
The only comment here is that in this case a block Cauchy matrix structure will arise, cf. e.g. \cite{ZZN12}, 
since all the primitive roots are involved. As the simplest example, one can take $k'_j=-\oa k_j$ (i.e. we only consider one primitive root) 
in theorem \ref{T:dt2DTLSol}. This immediately provides us with the soliton solutions of the reduced equations.

\subsection{Constraints on the linear and nonlinear variables}\label{S:Constraints}

The reduction \eqref{Reduction} provides not only the reduced direct linearising solutions, 
but also the corresponding constraints on the linear, bilinear and nonlinear variables in the 3D theory, leading to 2D equations. 
In fact, we can observe from \eqref{Reduction:PWF} that the plane wave factors obeys $\rho_k(n+r)\sigma_{-\oa^j k}(n+r)=\rho_k(n)\sigma_{-\oa^j k}(n)$, 
since $\oa^j$ are the $r$th primitive roots of unity, which implies that the reduced infinite matrix $\bC$ given in \eqref{Reduction:U} satisfies 
$\bC(n+r)=\bC(n)$ as its dynamics rely on the effective plane wave factors $\rho_k\sigma_{-\oa^j k}$. 
Recalling that the dynamics of the infinite matrix $\bU$ and the tau function $\tau$ explicitly depend on $\bC$ 
and the dynamics of the wave function $\bu_k$ are determined by $\bU$ and $\rho_k$, cf. \eqref{U}, \eqref{tau} and \eqref{uk}, 
we can therefore obtain 
\begin{align*}
 \bU(n+r)=\bU(n), \quad \tau(n+r)=\tau(n) \quad \hbox{and} \quad \bu_k(n+r)=k^r\bu_k(n).
\end{align*}
For future convenience, in the discrete-time 2DTL of $A_{r-1}^{(1)}$-type, we introduce a suffix $n$ for each variable. 
For example, we mark $u=u_n$, $\dot u=u_{n+1}$ and $\underdot u=u_{n-1}$ (and similar for $v$, $\tau$ and $\phi$). 
Taking the corresponding components or entries for the above relations, we have the constraints for all the variables we need 
and conclude them in the following proposition. 
\begin{proposition}\label{P:Constraints}
The $r$-periodic reduction results in the following constraints on the nonlinear, bilinear and linear variables: 
\begin{align}\label{Constraints}
 u_{n+r}=u_n, \quad v_{n+r}=v_{n}, \quad \tau_{n+r}=\tau_n, \quad \phi_{n+r}=k^n\phi_n,
\end{align}
which implies that we only need to consider $u_i$, $v_i$, $\tau_i$ and $\phi_i$ for $i=0,1,\cdots,r-1$ in the $r$-periodic reduction. 
\end{proposition}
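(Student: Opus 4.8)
The plan is to isolate the $n$-dependence in the reduced scheme and then propagate it through the defining relations \eqref{uk}, \eqref{U} and \eqref{tau} of the key ingredients. The starting observation is purely arithmetic: since each $\oa^j$ with $j\in J$ is an $r$th root of unity, $(\oa^j)^r=1$, and hence the reduced effective plane wave factor \eqref{Reduction:PWF} obeys $\rho_k\sigma_{-\oa^j k}(n+r)=\rho_k\sigma_{-\oa^j k}(n)$ for every $j\in J$ (the $n_1$- and $n_{-1}$-factors carry no $n$ at all, while the factor $(1/\oa^j)^n$ is manifestly $r$-periodic in $n$). This is the only place in the reduced data where the dynamics in the $n$-direction enters.

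First I would show $\bC(n+r)=\bC(n)$. In the reduced expression for $\bC$ in \eqref{Reduction:U}, the vectors $\bc_k$ and $\tbc_{-\oa^j k}$, the densities $\rd\ld_j(k)$ and the contours $\Gamma_j$ are all independent of $n$, so the whole $n$-dependence of $\bC$ is carried by the factors $\rho_k\sigma_{-\oa^j k}$; by the previous paragraph $\bC$ is therefore $r$-periodic in $n$. Since the Cauchy-kernel matrix $\bOa$ is by construction independent of all dynamical (hence of all discrete) variables, the representation $\bU=\bC\cdot(1+\bOa\cdot\bC)^{-1}$ from \eqref{U} and the definition $\tau=\det(1+\bOa\cdot\bC)$ from \eqref{tau} give at once $\bU(n+r)=\bU(n)$ and $\tau(n+r)=\tau(n)$.

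Next I would treat the wave function. From \eqref{uk}, $\bu_k=(1-\bU\cdot\bOa)\cdot\rho_k\bc_k$; here $1-\bU\cdot\bOa$ is $r$-periodic in $n$ by the previous step, while the explicit factor $k^n$ in $\rho_k$ (see \eqref{dt2DTL:Linear}) yields $\rho_k(n+r)=k^r\rho_k(n)$, so $\bu_k(n+r)=k^r\bu_k(n)$. Finally, reading off the relevant entries and component — $u_n=(\bU)_{0,0}$, $v_n=1-(\bU)_{0,-1}$, $\tau_n=\det(1+\bOa\cdot\bC)$ and $\phi_n=(\bu_k)_0$ — converts the three matrix identities and the vector identity into the four scalar statements collected in \eqref{Constraints}; the constraint on $\phi_n$ is obtained by taking the $0$th component of the identity for $\bu_k$.

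The argument is essentially bookkeeping, so the only genuine content — and the point at which to be careful — is the claim that the reduction \eqref{Reduction} really does confine \emph{all} of the $n$-dependence to the plane wave factors. Concretely, one must verify that the single-integral form \eqref{Reduction:Integral}--\eqref{Reduction:U} can be arranged with contours $\Gamma_j$ and densities $\rd\ld_j(k)$ fixed once and for all, independently of $n_1$, $n_{-1}$ and $n$, and that the resulting integrals are well behaved enough for the formal manipulations used above — in particular the inversion $(1+\bOa\cdot\bC)^{-1}$ and the series defining the determinant in \eqref{tau} — to be legitimate. Once this is granted, the identity $(\oa^j)^r=1$ does all of the remaining work.
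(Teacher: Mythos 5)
Your argument is correct and follows essentially the same route as the paper: $r$-periodicity of the effective plane wave factor (from $(\oa^j)^r=1$) propagates to $\bC$, thence to $\bU$ and $\tau$ via \eqref{U} and \eqref{tau}, and to $\bu_k$ via \eqref{uk} together with $\rho_k(n+r)=k^r\rho_k(n)$, after which one reads off components. Note that your conclusion $\phi_{n+r}=k^r\phi_n$ is the correct one --- the exponent in the printed statement \eqref{Constraints} appears to be a typo, as confirmed by the paper's own intermediate relation $\bu_k(n+r)=k^r\bu_k(n)$ and by the later use of $\phi_{-1}=k^{-r}\phi_{r-1}$ in the Lax matrices.
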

Furthermore, there are also additional relations arising from the reduction. 
\begin{proposition}\label{P:Conservation}
The unmodified and modified variables $u_n$ and $v_n$ obey identities 
\begin{align}\label{Conservation}
 \prod_{i=0}^{r-1}(p_1+u_{i+1}-\wt u_i)=p_1^r, \quad \prod_{i=0}^{r-1}\left[1+p_{-1}(u_i-\check u_i)\right]=1 \quad \hbox{and} \quad \prod_{i=0}^{r-1}v_i=1, 
\end{align}
in the class of the discrete-time 2DTL of $A_{r-1}^{(1)}$-type. 
\end{proposition}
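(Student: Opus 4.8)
The plan is to reduce each of the three identities in \eqref{Conservation} to a telescoping product of tau functions, using the bilinear transforms \eqref{dt2DTL:BLT} (equivalently the Miura transforms \eqref{dt2DTL:MT}) together with the periodicity constraints of Proposition \ref{P:Constraints}. I would begin by observing that every structural identity obtained in the 3D theory — in particular \eqref{dt2DTL:MT} and \eqref{dt2DTL:BLT} — remains valid after the $r$-periodic reduction, since that reduction merely specialises the measure $\rd\zeta$ in \eqref{Reduction}; the genuinely new input is the set of relations \eqref{Constraints}, which are functional identities in $n_1$ and $n_{-1}$ and hence persist after applying the shifts $\wt{\,\cdot\,}$ and $\check{\,\cdot\,}$.

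With this in hand, all three identities are immediate. For the last one, the relation $v_i=\tau_{i+1}/\tau_i$ from \eqref{dt2DTL:BLT} gives $\prod_{i=0}^{r-1}v_i=\tau_r/\tau_0=1$ by $\tau_{n+r}=\tau_n$. For the first, \eqref{dt2DTL:BLT} gives $p_1+u_{i+1}-\wt u_i=p_1\tau_i\wt\tau_{i+1}/(\tau_{i+1}\wt\tau_i)$, so that
\[
 \prod_{i=0}^{r-1}(p_1+u_{i+1}-\wt u_i)
 =p_1^r\left(\prod_{i=0}^{r-1}\frac{\tau_i}{\tau_{i+1}}\right)\left(\prod_{i=0}^{r-1}\frac{\wt\tau_{i+1}}{\wt\tau_i}\right)
 =p_1^r\cdot\frac{\tau_0}{\tau_r}\cdot\frac{\wt\tau_r}{\wt\tau_0}=p_1^r,
\]
using $\tau_{n+r}=\tau_n$ at both $n_1$ and $n_1+1$. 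For the second, \eqref{dt2DTL:BLT} gives $1+p_{-1}(u_i-\check u_i)=\tau_{i-1}\check\tau_{i+1}/(\tau_i\check\tau_i)$, whence
\[
 \prod_{i=0}^{r-1}\left[1+p_{-1}(u_i-\check u_i)\right]
 =\left(\prod_{i=0}^{r-1}\frac{\tau_{i-1}}{\tau_i}\right)\left(\prod_{i=0}^{r-1}\frac{\check\tau_{i+1}}{\check\tau_i}\right)
 =\frac{\tau_{-1}}{\tau_{r-1}}\cdot\frac{\check\tau_r}{\check\tau_0}=1,
\]
again by the periodicity of $\tau$ (now at $n_{-1}$ and $n_{-1}+1$). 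Alternatively, the first and second identities follow directly from the Miura transforms $p_1+u_{i+1}-\wt u_i=p_1\wt v_i/v_i$ and $1+p_{-1}(u_i-\check u_i)=\check v_i/v_{i-1}$ once the identity $\prod_{i=0}^{r-1}v_i=1$ is established; one then uses its shifts $\prod\wt v_i=\prod\check v_i=1$ and the reindexing $\prod_{i=0}^{r-1}v_{i-1}=\prod_{i=0}^{r-1}v_i$, which holds because $v_{-1}=v_{r-1}$.

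There is no substantial obstacle here: the proposition is essentially a bookkeeping exercise in telescoping products. The one point I would make explicit is the remark above — that the periodicity constraints of Proposition \ref{P:Constraints}, being valid for all values of the discrete variables, continue to hold after the lattice shifts $\wt{\,\cdot\,}$ or $\check{\,\cdot\,}$, which is exactly what makes the products of shifted tau functions close up — together with the observation at the start of the proof explaining why the 3D difference transforms survive the reduction.
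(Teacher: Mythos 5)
Your proof is correct and follows essentially the same route as the paper's: each identity is reduced via the bilinear transforms \eqref{dt2DTL:BLT} to a telescoping product of tau-function ratios, which closes up by the periodicity $\tau_{n+r}=\tau_n$ of Proposition \ref{P:Constraints}. The extra remarks you make (that the 3D transforms survive the reduction and that periodicity persists under the lattice shifts) are sound but are left implicit in the paper.
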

\begin{proof}
These identities are easily proven as they all follow from \eqref{dt2DTL:BLT} and proposition \ref{P:Constraints}. For instance, 
\begin{align*}
 \prod_{i=0}^{r-1}(p_1+u_{i+1}-\wt u_i)=p_1^r\prod_{i=0}^{r-1}\frac{\tau_i\wt\tau_{i+1}}{\tau_{i+1}\wt\tau_i}
 =p_1^r\frac{\tau_0\wt\tau_r}{\tau_r\wt\tau_0}=p_1^r \quad \hbox{and} \quad
 \prod_{i=0}^{r-1}v_n=\prod_{i=0}^{r-1}\frac{\tau_{i+1}}{\tau_i}=\frac{\tau_r}{\tau_0}=1,
\end{align*}
where $\tau_r=\tau_0$ is used in the last step in each equation. 
The second identity is proven similarly. 
\end{proof}
The identities in proposition \ref{P:Conservation} were also given in \cite{FX17} and were referred to as discrete first integrals. 
Here we can see that they are consequences of the periodicity of the tau function in the reduction.

\subsection{Reduced bilinear and nonlinear equations}\label{S:NL}

We can now construct the bilinear and nonlinear equations in the class of the discrete-time 2DTL of $A_{r-1}^{(1)}$-type, 
with the help of the constraints generated by the periodic reduction. 

We first consider the bilinear equation, which is derived from \eqref{dt2DTL:BL}. 
By selecting $n=0,1,\cdots,r-1$ and making use of $\tau_{n+r}=\tau_n$, we obtain a coupled system of bilinear discrete equations 
\bse\label{Reduction:BL}
\begin{align}
 &p_1p_{-1}(\tau_0\check{\wt\tau}_0-\wt\tau_0\check\tau_0)=\tau_0\check{\wt\tau}_0-\check\tau_{1}\wt\tau_{r-1}, \\
 &p_1p_{-1}(\tau_n\check{\wt\tau}_n-\wt\tau_n\check\tau_n)=\tau_n\check{\wt\tau}_n-\check\tau_{n+1}\wt\tau_{n-1}, \quad n=1,2,\cdots,r-2, \\
 &p_1p_{-1}(\tau_{r-1}\check{\wt\tau}_{r-1}-\wt\tau_{r-1}\check\tau_{r-1})=\tau_{r-1}\check{\wt\tau}_{r-1}-\check\tau_{0}\wt\tau_{r-2},
\end{align}
\ese
which we refer to as the bilinear discrete-time 2DTL of $A_{r-1}^{(1)}$-type. 

Similarly, the unmodified discrete-time 2DTL of $A_{r-1}^{(1)}$-type is derived from \eqref{dt2DTL:Un} with the help of $u_{n+r}=u_n$ 
and also takes a coupled system form 
\bse\label{Reduction:Un}
\begin{align}
 &\frac{p_1+\check u_{1}-\check{\wt u}_0}{p_1+u_0-\wt u_{r-1}}=\frac{1+p_{-1}(\wt u_0-\check{\wt u}_0)}{1+p_{-1}(u_0-\check u_0)}, \\
 &\frac{p_1+\check u_{n+1}-\check{\wt u}_n}{p_1+u_n-\wt u_{n-1}}=\frac{1+p_{-1}(\wt u_n-\check{\wt u}_n)}{1+p_{-1}(u_n-\check u_n)}, \quad n=1,2,\cdots,r-2, \\
 &\frac{p_1+\check u_{0}-\check{\wt u}_{r-1}}{p_1+u_{r-1}-\wt u_{r-2}}=\frac{1+p_{-1}(\wt u_{r-1}-\check{\wt u}_{r-1})}{1+p_{-1}(u_{r-1}-\check u_{r-1})}.
\end{align}
\ese
This system should also be referred to as the negative flow\footnote{\label{dBSQ:PWF}
Compare \eqref{Reduction:PWF} with the effective plane wave factor 
\begin{align*}
 \rho_k\sigma_{-\oa^j k}=\left(\frac{p_1+k}{p_1+\oa^j k}\right)^{n_1}\left(\frac{p_2+k}{p_2+\oa^j k}\right)^{n_2}
\end{align*}
for the discrete GD hierarchy, see \cite{NPCQ92}. 
} of the $r$th member in the discrete unmodified GD hierarchy. 
Although the variable $u$ obeys an additional identity (see proposition \ref{P:Conservation}), 
in this case it seems not possible to eliminate one dependent variable in \eqref{Reduction:Un} and express the coupled system by $r-1$ components. 
However, for the positive flow of a member in the discrete GD hierarchy, 
one can reduce the number of components by one, making use of such an identity, see \cite{FX17}. 

The modified discrete-time 2DTL of $A_{r-1}^{(1)}$-type can be written as the following coupled system of discrete equations: 
\bse\label{Reduction:Mod}
\begin{align}
 &p_1p_{-1}\left(\frac{\check{\wt v}_0}{\check v_0}-\frac{\wt v_0}{v_0}\right)
 =-\frac{\check v_{1}}{v_0}+\frac{\check{\wt v}_0}{\wt v_{r-1}}, \\
 &p_1p_{-1}\left(\frac{\check{\wt v}_n}{\check v_n}-\frac{\wt v_n}{v_n}\right)
 =-\frac{\check v_{n+1}}{v_n}+\frac{\check{\wt v}_n}{\wt v_{n-1}}, \quad n=1,2,\cdots,r-2, \\
 &p_1p_{-1}\left(\frac{\check{\wt v}_{r-1}}{\check v_{r-1}}-\frac{\wt v_{r-1}}{v_{r-1}}\right)
 =-\frac{\check v_{0}}{v_{r-1}}+\frac{\check{\wt v}_{r-1}}{\wt v_{r-2}},
\end{align}
\ese
which can also be referred to as the negative flow of the $r$th member in the discrete modified GD hierarchy. 
In this case, it is possible to eliminate one of the components with the help of proposition \ref{P:Conservation}. 
Without loss of generality, we can replace $v_{r-1}$ by $v_0,\cdots,v_{r-2}$ according to $v_{r-1}=1/\prod_{i=0}^{r-2}v_i$. 
Thus, the modified system can also be written as 
\begin{align*}
 &p_1p_{-1}\left(\frac{\check{\wt v}_0}{\check v_0}-\frac{\wt v_0}{v_0}\right)
 =-\frac{\check v_{1}}{v_0}+\check{\wt v}_0\prod_{i=0}^{r-2}\wt v_i, \\
 &p_1p_{-1}\left(\frac{\check{\wt v}_n}{\check v_n}-\frac{\wt v_n}{v_n}\right)
 =-\frac{\check v_{n+1}}{v_n}+\frac{\check{\wt v}_n}{\wt v_{n-1}}, \quad n=1,2,\cdots,r-3, \\
 &p_1p_{-1}\left(\frac{\check{\wt v}_{r-2}}{\check v_{r-2}}-\frac{\wt v_{r-2}}{v_{r-2}}\right)
 =-\frac{1}{v_{r-2}\prod_{i=0}^{r-2}\check v_i}+\frac{\check{\wt v}_{r-2}}{\wt v_{r-3}},
\end{align*}
The transform $v_n=\exp\varphi_n$ gives us the exponential form of the discrete-time 2DTL of $A_{r-1}^{(1)}$-type. 

The constraints \eqref{Constraints} simultaneously bring us the Miura and bilinear transforms 
\bse\label{Reduction:Tranforms}
\begin{align}
 &p_1+u_{n+1}-\wt u_n=p_1\frac{\wt v_n}{v_n}=p_1\frac{\tau_n\wt\tau_{n+1}}{\tau_{n+1}\wt\tau_n}, \quad 
 p_1+u_0-\wt u_{r-1}=p_1\frac{\wt v_{r-1}}{v_{r-1}}=p_1\frac{\tau_{r-1}\wt\tau_{0}}{\tau_{0}\wt\tau_{r-1}}
\end{align}
for $n=0,1,\cdots,r-2$, and 
\begin{align}
 &1+p_{-1}(u_0-\check u_0)=\frac{\check v_0}{v_{r-1}}=\frac{\tau_{r-1}\check\tau_{1}}{\tau_0\check\tau_0}, \nonumber \\
 &1+p_{-1}(u_n-\check u_n)=\frac{\check v_n}{v_{n-1}}=\frac{\tau_{n-1}\check\tau_{n+1}}{\tau_n\check\tau_n}, \\
 &1+p_{-1}(u_{r-1}-\check u_{r-1})=\frac{\check v_{r-1}}{v_{r-2}}=\frac{\tau_{r-2}\check\tau_{0}}{\tau_{r-1}\check\tau_{r-1}} \nonumber
\end{align}
\ese
for $n=1,2,\cdots,r-2$, which relate \eqref{Reduction:Un}, \eqref{Reduction:Mod} and \eqref{Reduction:BL} with each other. 

The unmodified equation \eqref{Reduction:Un} and the modified equation \eqref{Reduction:Mod} 
are new parametrisations of the integrable lattice equations corresponding to equivalence class $[(0,1;r-1,0)]$ given in \cite{FX17}, 
allowing exact solutions and continuum limits.

\subsection{Lax matrices}\label{S:Lax}

The Lax pairs for the reduced equations \eqref{Reduction:BL}, \eqref{Reduction:Un} and \eqref{Reduction:Mod} 
can be constructed from \eqref{dt2DTL:Lax} under the constraints given in proposition \ref{P:Constraints}. 
We only write down the Lax pair for the unmodified equation in $u_n$, 
and those for the modified and bilinear equations can be obtained by using the Miura and bilinear transforms given in \eqref{Reduction:Tranforms}. 

The ``tilde'' equation of the Lax pair is a consequence of \eqref{dt2DTL:Laxa} and the constraints on $u_n$ and $\phi_n$ in \eqref{Constraints}, 
taking the form of 
\bse\label{Reduction:Lax}
\begin{align}\label{Reduction:Laxa}
 \left(
 \begin{array}{c}
 \phi_0 \\
 \phi_1 \\
 \vdots \\
 \phi_{r-2} \\
 \phi_{r-1} 
 \end{array}
 \right)^{\tilde{}}=
 \begin{pmatrix}
  p_1+u_1-\wt u_0 & 1 & & & \\
  & p_1+u_2-\wt u_1 & 1 & & \\
  & & \ddots & \ddots & \\
  & & & p_1+u_{r-1}-\wt u_{r-2} & 1 \\
  k^r & & & & p_1+u_0-\wt u_{r-1}
 \end{pmatrix}
 \left(
 \begin{array}{c}
 \phi_0 \\
 \phi_1 \\
 \vdots \\
 \phi_{r-2} \\
 \phi_{r-1} 
 \end{array}
 \right).
\end{align}
This linear equation is also compatible with the positive flows of members in the discrete GD hierarchy, 
and it is gauge equivalent to the one given in \cite{NPCQ92}. 
The dynamics describing the negative flow is contained in the ``check'' part, 
which follows from \eqref{dt2DTL:Laxb} and $\phi_{-1}=k^{-r}\phi_{r-1}$ in proposition \ref{P:Constraints} and is given by 
\begin{align}\label{Reduction:Laxb}
 \left(
 \begin{array}{c}
 \phi_0 \\
 \phi_1 \\
 \vdots \\
 \phi_{r-2} \\
 \phi_{r-1} 
 \end{array}
 \right)^{\check{}}=
 \begin{pmatrix}
  p_{-1} & & & & * \\
  1+p_{-1}(u_1-\check u_1) & p_{-1} & & & \\
  & \ddots & \ddots & & \\
  & & 1+p_{-1}(u_{r-2}-\check u_{r-2}) & p_{-1} & \\
  & & & 1+p_{-1}(u_{r-1}-\check u_{r-1}) & p_{-1}
 \end{pmatrix}
 \left(
 \begin{array}{c}
 \phi_0 \\
 \phi_1 \\
 \vdots \\
 \phi_{r-2} \\
 \phi_{r-1} 
 \end{array}
 \right),
\end{align}
\ese
where $*=k^{-r}[1+p_{-1}(u_0-\check u_0)]$. 
The compatibility condition of the Lax pair gives rise to the coupled system \eqref{Reduction:Un}. 

An alternative way to construct such a linear problem was discussed within a framework of $\mathbb{Z}_N$ graded algebras. 
In fact, by introducing a gauge transform 
\begin{align*}
 \left(
 \begin{array}{c}
 \psi_0 \\
 \psi_1 \\
 \vdots \\
 \psi_{r-2} \\
 \psi_{r-1} 
 \end{array}
 \right)
 =
 \begin{pmatrix}
  1 & & & & \\
  & k^{-1} & & & \\
  & & \ddots & & \\
  & & & k^{-(r-2)} & \\
  & & & & k^{-(r-1)}
 \end{pmatrix}
 \left(
 \begin{array}{c}
 \phi_0 \\
 \phi_1 \\
 \vdots \\
 \phi_{r-2} \\
 \phi_{r-1} 
 \end{array}
 \right),
\end{align*}
the linear equations in \eqref{Reduction:Lax} become more or less the same as the ones given in \cite{FX17}. 
To be more precise, the ``tilde'' part becomes exactly the same (up to a translation on the potential variables $u_n$), 
and the ``check'' part has the same matrix structure but in our case it depends on $k^{-1}$ explicitly.

\subsection{$A_1^{(1)}$ and $A_2^{(1)}$: Negative flows of discrete Korteweg--de Vries and Boussinesq}\label{S:Examples}

We list two concrete examples explicitly, namely the discrete-time 2DTLs of $A_1^{(1)}$-type and $A_2^{(1)}$-type. 
The $A_1^{(1)}$ class is corresponding to the negative flows of the discrete KdV-type equations, 
including the discrete sine--Gordon equation which was discovered and rediscovered in the literature for times. 
The $A_2^{(2)}$ class gives the negative flows of the discrete Boussinesq-type equations. 

The $A_1^{(1)}$ class includes the negative flows of the discrete unmodified and modified KdV equations, 
and the bilinear discrete-time 2DTL of $A_1^{(1)}$ as follows: 
\begin{align*}
 &\frac{p_1+\check u_1-\check{\wt u}_0}{p_1+u_0-\wt u_1}=\frac{1+p_{-1}(\wt u_0-\check{\wt u}_0)}{1+p_{-1}(u_0-\check u_0)}, \quad 
 \frac{p_1+\check u_0-\check{\wt u}_1}{p_1+u_1-\wt u_0}=\frac{1+p_{-1}(\wt u_1-\check{\wt u}_1)}{1+p_{-1}(u_1-\check u_1)}, \\
 &p_1p_{-1}\left(\frac{\check{\wt v}_0}{\check v_0}-\frac{\wt v_0}{v_0}\right)=-\frac{1}{v_0\check v_0}+\wt v_0\check{\wt v}_0, \\
 &p_1p_{-1}(\tau_0\check{\wt\tau}_0-\wt\tau_0\check\tau_0)=\tau_0\check{\wt\tau}_0-\check\tau_1\wt\tau_1, \quad 
 p_1p_{-1}(\tau_1\check{\wt\tau}_1-\wt\tau_1\check\tau_1)=\tau_1\check{\wt\tau}_1-\check\tau_0\wt\tau_0.
\end{align*}
The transform $v_0=\exp\varphi_0$ then brings us 
\begin{align*}
 p_1p_{-1}\left(\exp(\check{\wt\varphi}_0-\check\varphi_0)-\exp(\wt\varphi_0-\varphi_0)\right)
 =\exp(\wt\varphi_0+\check{\wt\varphi}_0)-\exp(-\varphi_0-\check\varphi_0),
\end{align*}
which is the discrete analogue of the continuous-time sinh--Gordon equation (i.e. the first equation in \eqref{sinh-Gordon}). 
This equation can also be written in another form, namely the discrete sine--Gordon equation
\begin{align*}
 p_1p_{-1}\sin(\vartheta_0+\check{\wt\vartheta}_0-\wt\vartheta_0-\check\vartheta_0)=\sin(\vartheta_0+\wt\vartheta_0+\check\vartheta_0+\check{\wt\vartheta}_0),
\end{align*}
by transform $\varphi_0=2\mathfrak{i}\vartheta_0$. One comment here is that the unmodified equation in $u_0$ and $u_1$ can also be written in a coupled system 
composed of the unmodified and modified variables $u_0$ and $v_0$, with the help of the Miura transform $p_1+u_1-\wt u_0=p_1\wt v_0/v_0$. 

Now we consider the class of the discrete-time 2DTL of $A_2^{(1)}$-type, namely the 3-periodic reduction. 
The bilinear discrete-time 2DTL of $A_2^{(1)}$-type according to the general framework \eqref{Reduction:BL} is given by 
\begin{align*}
 p_1p_{-1}(\tau_0\check{\wt\tau}_0-\wt\tau_0\check\tau_0)=\tau_0\check{\wt\tau}_0-\check\tau_1\wt\tau_2, \quad
 p_1p_{-1}(\tau_1\check{\wt\tau}_1-\wt\tau_1\check\tau_1)=\tau_1\check{\wt\tau}_1-\check\tau_2\wt\tau_0, \quad
 p_1p_{-1}(\tau_2\check{\wt\tau}_2-\wt\tau_2\check\tau_2)=\tau_2\check{\wt\tau}_2-\check\tau_0\wt\tau_1,
\end{align*}
which is also the negative flow of the trilinear discrete Boussinesq equation discussed in \cite{FN17b,ZZN12}. 

The unmodified equation in this class is a three-component system involving $u_0$, $u_1$ and $u_2$, which takes the form of 
\begin{align*}
 \frac{p_1+\check u_1-\check{\wt u}_0}{p_1+u_0-\wt u_2}=\frac{1+p_{-1}(\wt u_0-\check{\wt u}_0)}{1+p_{-1}(u_0-\check u_0)}, \quad 
 \frac{p_1+\check u_2-\check{\wt u}_1}{p_1+u_1-\wt u_0}=\frac{1+p_{-1}(\wt u_1-\check{\wt u}_1)}{1+p_{-1}(u_1-\check u_1)}, \quad 
 \frac{p_1+\check u_0-\check{\wt u}_2}{p_1+u_2-\wt u_1}=\frac{1+p_{-1}(\wt u_2-\check{\wt u}_2)}{1+p_{-1}(u_2-\check u_2)}.
\end{align*}
This equation is also not decoupled and acts as the negative flow of the 9-point discrete unmodified Boussinesq equation proposed in \cite{NPCQ92}. 

The modified equation is also a three-component system of discrete equations, composed of $v_0$, $v_1$ and $v_2$, according to \eqref{Reduction:Mod}. 
In this case, we can make use of the identity $v_2=1/(v_0v_1)$ (cf. equation \eqref{Conservation}) and eliminate the component $v_2$. 
As a consequence, we have 
\begin{align*}
 p_1p_{-1}\left(\frac{\check{\wt v}_0}{\check v_0}-\frac{\wt v_0}{v_0}\right)
 =-\frac{\check v_1}{v_0}+\check{\wt v}_0\wt v_0\wt v_1, \quad 
 p_1p_{-1}\left(\frac{\check{\wt v}_1}{\check v_1}-\frac{\wt v_1}{v_1}\right)
 =-\frac{1}{\check v_0\check v_1 v_1}+\frac{\check{\wt v}_1}{\wt v_{0}},
\end{align*}
which is the negative flow of the 9-point discrete modified Boussinesq equation given in \cite{NPCQ92}. 
The exponential form of the discrete-time 2DTL of $A_2^{(1)}$-type is derived by the transforms $v_0=\exp\varphi_0$ and $v_1=\exp\varphi_1$, 
taking the form of 
\begin{align*}
 &p_1p_{-1}\left(\exp(\check{\wt\varphi}_0-\check\varphi_0)-\exp(\wt\varphi_0-\varphi_0)\right)
 =\exp(\check{\wt\varphi}_0+\wt\varphi_0+\wt\varphi_1)-\exp(\check\varphi_1-\varphi_0), \\
 &p_1p_{-1}\left(\exp(\check{\wt\varphi}_1-\check\varphi_1)-\exp(\wt\varphi_1-\varphi_1)\right)
 =\exp(\check{\wt\varphi}_1-\wt\varphi_0)-\exp(-\check\varphi_0-\check\varphi_1-\varphi_1),
\end{align*}
which is the $A_2^{(1)}$-type analogue of \eqref{dt2DTL}. 

The continuum limits of these discrete equations follow the same scheme in subsection \ref{S:Limit}, 
leading to the negative flows in the continuous KdV and Boussinesq hierarchies.

\section{Concluding remarks}\label{S:Concl}

The direct linearisation scheme was established for the discrete-time 2DTL equations of $A_\infty$-type and $A_{r-1}^{(1)}$-type. 
For each algebra, a class of nonlinear (including bilinear) equations arise and their integrability is guaranteed 
in the sense of having Lax pairs and direct linearising solutions. 

For convenience we mainly focused on equations expressed by the (potential) unmodified variable, the (potential) modified variable, 
as well as the tau function, because the resulting equations take relatively simple forms (i.e. scalar octahedron-type equations). 
There certainly exist alternative nonlinear forms, and sometimes a nonlinear equation could even take the form of a coupled system, 
which normally happens in a closed-form equation expressed by a nonpotential variable, see e.g. \cite{HIK88,KOS94}. 

In fact, in the framework there is also another nonpotential form of the discrete-time 2DTL of $A_\infty$-type. 
Introducing nonpotential variables
\begin{align*}
 P\doteq p_{-1}^{-1}+u-\check u=p_{-1}^{-1}\frac{\check v}{\underdot v}=p_{-1}^{-1}\frac{\underdot\tau\dot{\check\tau}}{\tau\check\tau} \quad \hbox{and} \quad 
 Q\doteq p_1+\dot u-\wt u=p_1\frac{\wt v}{v}=p_1\frac{\tau\dot{\wt\tau}}{\dot\tau\wt\tau},
\end{align*}
we obtain the nonpotential equation taking the form of a coupled system of $P$ and $Q$, namely 
\begin{align}\label{dt2DTL:Nonpotential}
 \check Q/\underdot Q=\wt P/P, \quad \check Q-Q=\wt P-\dot P,
\end{align}
which follow from the unmodified equation \eqref{dt2DTL:Un} and the modified equation \eqref{dt2DTL:Mod}, respectively. 
Equation \eqref{dt2DTL:Nonpotential} is still a 6-point equation (if we count both components $P$ and $Q$), 
similar to equations given in theorem \ref{T:dt2DTLNL}. The Lax pair of this coupled system takes the form of 
\begin{align*}
 \wt\phi=Q\phi+\dot\phi, \quad \check\phi=p_{-1}(\phi+P\underdot\phi)
\end{align*}
which is derived from \eqref{dt2DTL:Lax} by replacing $u$ by the nonpotential variables. 
Following the idea about deriving the nonpotential discrete KP equation given in \cite{FN17a},
one can eliminate either $P$ or $Q$ in \eqref{dt2DTL:Nonpotential}.
As a consequence, a 10-point scalar equation in terms of only $Q$ or $P$ can be expected, which we leave as an exercise. 
This is not surprising as we have already seen such a lattice structure in the derivation of the bilinear equation \eqref{dt2DTL:BL}, 
see the proof of theorem \ref{T:dt2DTLNL}. 

The solution structure of the discrete-time 2DTL of $A_\infty$-type is the related to that of the discrete KP equation, cf. \cite{FN17a}. 
To be more precise, the nonlinear structure \eqref{dt2DTL:Kernel} is exactly the same, and the linear structure \eqref{dt2DTL:PWF} 
is a slight deformation of that in KP and it can be reconstructed from the plane wave factor of the discrete KP equation in a subtle way. 
However, such a deformation leads to discrete integrable systems having different lattice structures 
(though in this case they are still octahedron-type equations). 

The unmodified and modified equations \eqref{Reduction:Un} and \eqref{Reduction:Mod} 
are corresponding to some new equations equivalence class $[(0,1;r-1,0)]$ in \cite{FX17}. 
Thus, we identify that these new equations are the nonlinear equations in the class of the discrete-time 2DTL of $A_{r-1}^{(1)}$-type, 
which arise as the dimensional reductions of \eqref{dt2DTL:Un} and \eqref{dt2DTL:Mod}. 
Equations in other equivalence classes in \cite{FX17} are actually also associated with the algebra $A_{r-1}^{(1)}$. To put it another way, 
these new equations share the same kernel and measure in the direct linearisation, and obey different (discrete) time evolutions. 

The direct linearisation of the discrete-time 2DTLs of other types (such as $B_\infty$ and $C_\infty$ and their reductions) will be reported elsewhere. 
From the author's experience, the structures of these equations are very different. 
This is mainly because of the fractionally linear dependence on the spectral parameters in their dispersions \cite{FN17a}. 
However, such an issue does not occur in the continuous theory, and various continuous-time 2DTLs of different types arise 
very naturally as reductions of the 2DTL of $A_\infty$-type. 

\subsection*{Acknowledgements}
The author is very grateful to both referees for their suggestions, which improve the manuscript a lot. 
He also thanks Frank Nijhoff for useful comments on a draft version of the manuscript 
and Allan Fordy for explaining some details in \cite{FX17}. 
This project was supported by a Leeds International Research Scholarship (LIRS) and by the School of Mathematics.

\end{document}